% This is samplepaper.tex, a sample chapter demonstrating the
% LLNCS macro package for Springer Computer Science proceedings;
% Version 2.20 of 2017/10/04
%
\documentclass[runningheads]{llncs}
\usepackage{graphicx}
\usepackage{
  amsmath,
  amsfonts,
  amscd,
 %amsthm,
  amssymb,
  color,
  enumerate, % Use [wide] option for non-indented lists
  dirtytalk,
  xcolor,
  etoolbox,
  graphicx,
  comment,      % To comment out a section
  mathtools,    % Fix some bugs in amsmath, add some functionality
  enumitem,
  mathdots,     % Fixes some problems with \dots, \dots, etc.
  xifthen,
  bbold,
}

%%%%%%%%%%%%%%%
%Comments
\newtoggle{comments}
\newtoggle{details}
\newtoggle{detailsnote}

\toggletrue{comments}   % To include comments (default is false)  % To include details (default is false)
\toggletrue{detailsnote}   % For including note about details toggle (default is false)

\iftoggle{comments}{%
  \newcommand{\jcomments}[1]{
    \ \\
    {\color{purple}
      \textbf{JS:} #1
    }
    \\
    }
}{%
  \newcommand{\jcomments}[1]{}
}

\iftoggle{details}{%
  \newcommand{\details}[1]{
      \ \\
      {\color{olive}
        \textbf{Details:} #1
      }
      \\
  }
}{%
  \newcommand{\details}[1]{}
}

%%%%%%%%%%%%%%

\newcommand\N{\mathbb{N}}

% Used for displaying a sample figure. If possible, figure files should
% be included in EPS format.
%
% If you use the hyperref package, please uncomment the following line
% to display URLs in blue roman font according to Springer's eBook style:
% \renewcommand\UrlFont{\color{blue}\rmfamily}

\begin{document}
\title{Zero-Knowledge MIPs using\\ Homomorphic Commitment Schemes}
\titlerunning{Zero-Knowledge MIPs using Homomorphic Commitment Schemes}
%
%\titlerunning{Abbreviated paper title}
% If the paper title is too long for the running head, you can set
% an abbreviated paper title here
%
%
\author{Claude Crépeau \and John Stuart}
\institute{McGill University}
\maketitle              % typeset the header of the contribution
\begin{abstract}
A Zero-Knowledge Protocol (ZKP) allows one party to convince another party of a fact without disclosing any extra knowledge except the validity of the fact. For example, it could be used to allow a customer to prove their identity to a potentially malicious bank machine without giving away private information such as a personal identification number. This way, any knowledge gained by a malicious bank machine during an interaction cannot be used later to compromise the client’s banking account. An important tool in many ZKPs is bit commitment, which is essentially a digital way for a sender to put a message in a lockbox, lock it, and send it to the receiver. Later, the key is sent for the receiver to open the lockbox and read the message. This way, the message is hidden from the receiver until they receive the key, and the sender is unable to change their mind after sending the lockbox. In this paper, the homomorphic properties of a particular multi-party commitment scheme are exploited to allow the receiver to perform operations on commitments, resulting in polynomial time ZKPs for two NP-Complete problems: the Subset Sum Problem and 3SAT. These ZKPs are secure with no computational restrictions on the provers, even with shared quantum entanglement. In terms of efficiency, the Subset Sum ZKP is competitive with other practical quantum-secure ZKPs in the literature, with less rounds required, and fewer computations.

\keywords{relativistic cryptography  \and zero-knowledge protocols \and quantum security.}
\end{abstract}
\section{Introduction}

Zero-knowledge proofs were first introduced by \cite{10.1145/22145.22178}. A couple years later, Blum presented an elegant ZKP for the Hamiltonian cycle problem in \cite{Blum87howto}. A Hamiltonian cycle is a cycle in a graph which passes through each vertex exactly once. After a graph $G$ is fixed, Blum's protocol allows a prover to convince a verifier that $G$ contains a Hamiltonian cycle without giving away any knowledge of the cycle. Like many other ZKPs, Blum's protocol uses bit commitment, which is essentially a secure way for Alice to send Bob a message in a locked box so that he cannot see the message until she sends the key at a later time, but she also cannot change the message after she sends the box. Bit commitment schemes typically rely on computational assumptions such as existence of one-way functions as in \cite{10.1145/116825.116852}, however there is a protocol that was proposed in \cite{https://doi.org/10.48550/arxiv.quant-ph/9806031} that only uses spatial separation between parties as its sole assumption to prove security. In \cite{Chailloux_2017}, the authors adapted Blum's ZKP from \cite{Blum87howto} to use the bit commitment protocol from \cite{https://doi.org/10.48550/arxiv.quant-ph/9806031}, and this new protocol was proven to be secure against quantum adversaries. 

One interesting aspect of the bit commitment scheme in \cite{https://doi.org/10.48550/arxiv.quant-ph/9806031} is its homomorphic properties. In particular, the contents of two commitments can be added together before unveiling, and in general, it is possible to form a new commitment containing any linear combination of the contents of the commitments. For this reason, along with rising public interest in homomorphic encryption, it would be an interesting result to have a ZKP that explicitly uses this homomorphic property while still being secure against quantum adversaries. Similar ideas have been considered for classical adversaries in \cite{4568210} and \cite{10.1007/3-540-47721-7_16}, where a perfectly hiding and computationally binding bit commitment scheme is used to create ZKPs that rely on the commitment scheme's homomorphic properties for many steps.

In addition to its homomorphic properties, the commitment scheme of \cite{https://doi.org/10.48550/arxiv.quant-ph/9806031} can be used to commit to values other than bits as long as the field $\mathbb{F}_Q$ is chosen carefully. An excellent NP-Complete problem to showcase this homomorphic property of the bit commitment scheme is the Subset Sum Problem, which asks the following: Given a set $S$ of positive integers and a target integer $k$, is there a subset of $S$ which sums to $k$ \cite{10.5555/1051910}? In Section \ref{sec:SSP}, a novel ZKP is given for this problem, and it is shown to be efficient. In particular, when compared to other practical quantum-secure ZKPs in the literature, our protocol requires less rounds and fewer computations than other protocols \cite{Alikhani_2021},
 \cite{Chailloux_2017}, \cite{https://doi.org/10.48550/arxiv.2112.01386}. Then in Section \ref{sec:3SAT}, another new ZKP is given for the well-known 3SAT problem which is also NP-Complete \cite{AhoAlfredV1974Tdaa}. It exploits a technique where the prover can convince the verifier that two committed bits are equal, or opposite, without unveiling them. Both ZKPs in this paper use a technique similar to that used in \cite{Chailloux_2017} in order to prove soundness against quantum adversaries. 

\section{Relativistic Bit Commitment}\label{sec:REL-CS}

In \cite{https://doi.org/10.48550/arxiv.quant-ph/9806031} the authors introduce a bit commitment scheme involving two provers and two verifiers. Several variations of the protocol have had their security examined in \cite{jofc-2005-14207}, \cite{10.1007/978-3-642-25385-0_22}, and \cite{Lunghi_2015}. Here, we will present the scheme, a brief explanation to why it is hiding and binding, then finally explore some of its useful properties. 

\subsection{Commitment Protocol}

The scheme requires two separated provers P1 and P2, as well as two verifiers V1 and V2. The operations will be over the field $\mathbb{F}_Q$ for some large prime power $Q$. Suppose that the provers want to commit to an element $b\in\mathbb{F}_Q$. They choose a random $c\in\mathbb{F}_Q$, and the protocol goes as follows:\\
~\\
~\\

\textit{Commit Phase}
\begin{enumerate}
    \item V1 chooses a random $a\in \mathbb{F}_Q$ and sends $a$ to P1.
    \item P1 replies with $w:=a b+c$ to V1. 
\end{enumerate}

Now the provers have committed to $b$. When the provers would like the verifiers to know their hidden value $b$, then they complete the unveil phase:\\

\textit{Unveil Phase}
\begin{enumerate}
    \item P2 sends $b,c$ to V2.
    \item V1 and V2 verify that $w=a b+c$.
\end{enumerate}

This protocol is illustrated in Figure \ref{fig:bc}.
\begin{figure}[ht!]
    \centering
    \includegraphics[scale=0.25]{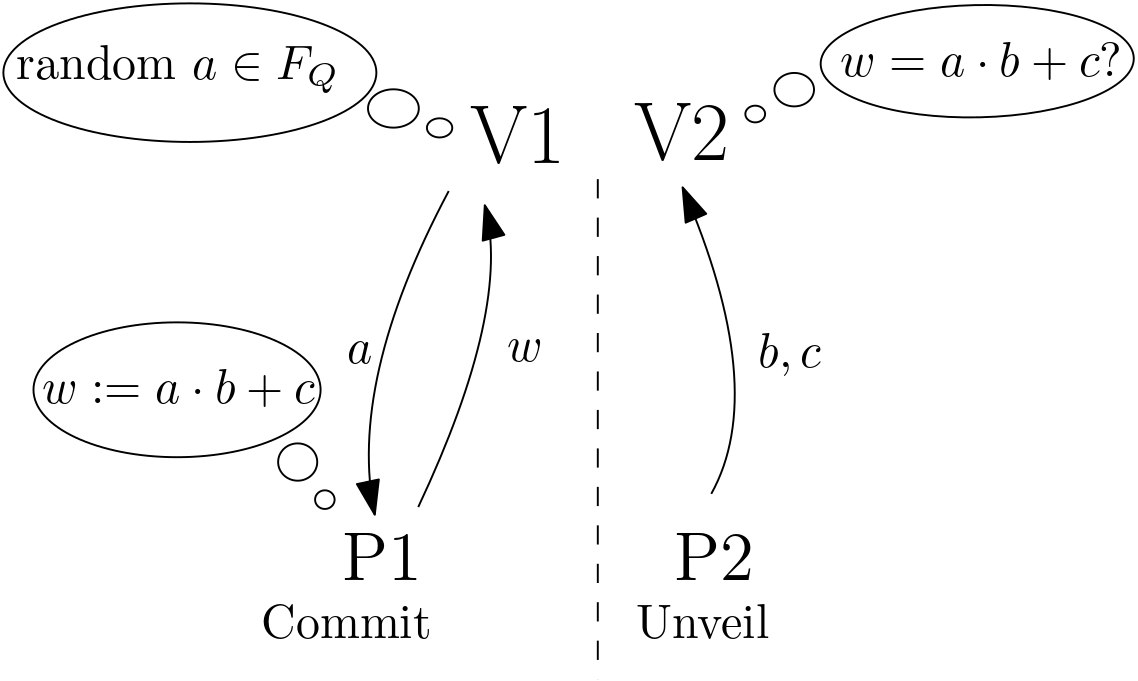}
    \caption{Relativistic Two-Prover Bit Commitment Scheme\label{fig:bc}}
\end{figure}

\subsection{Hiding Property}

We will now explain why the protocol is perfectly hiding. Once V1 has sent $a$ and received $w$, then for each possible $b$ that P1 could be committing to, there is a unique $c$ such that $c=w-ab$. Since $b$ and $c$ are inaccessible to V1 and V2 before the unveil phase, this means that no options can be ruled out by the verifiers. In other words, every possible value $b$ is equally likely from the verifiers' point of view.

\subsection{Binding Property}

Now we discuss why the protocol is binding. Put simply, after P1 completes the Commit phase, what keeps P2 from unveiling another value? The formal proof of the sum-binding property against quantum parties is given in \cite{Chailloux_2017}, therefore we will merely give an intuition. Suppose P1 sent $w$ to V1. Then if P2 would like to unveil to $b$, then P2 would have to send $b,c$ such that $w=a\cdot b+c$, whereas if P2 would like to instead unveil to $b'$, then P2 would have to send $b',c'$ such that $w=a\cdot b'+c'$. If P2 knows values $c,c'$ to successfully unveil $b$ or $b'$ at will, then P2 could compute $a$ since the equations above imply that
\begin{align*}
    \frac{c'-c}{b-b'}=a.
\end{align*}

However, P2 should only be able to compute $a$ with probability $\frac{1}{Q}$ since it was chosen randomly by V1 and only sent to P1 who is separated from P2. For this reason, the provers can only change their commitment with negligible probability. As mentioned, the quantum case is analyzed in \cite{Chailloux_2017}.

\subsection{Homomorphic Property}

One special feature of this commitment scheme is that the contents of two commitments can be combined and unveiled. We take advantage of this in order to create the two ZKPs in this paper. 

Imagine P1 commits to $b$ and $b'$, but P2 wants to have the option to unveil the sum $b+b'$ and nothing else. Then by combining the commitments, the verifiers are able to transform the commitments of $b$ and $b'$ into a commitment of the sum $b+b'$. More specifically, after the Commit phase, the verifiers possess the values $w$ and $w'$ such that
\begin{align*}
    w=a\cdot b+c\text{ and } w'=a'\cdot b'+c'.
\end{align*}

Then as long as $a=a'$, the verifiers can add the values they receive to obtain a commitment $w+w'$ of $b+b'$ since
\begin{align*}
    w+w'=(a\cdot b+c)+(a\cdot b'+c')=a(b+b')+(c+c').
\end{align*}

From the expression above, one can see that the key needed to unveil $b+b'$ is $c+c'$. This property is exploited in Section \ref{sec:SSP} where it is the primary mechanism for a simple ZKP for the Subset Sum Problem. Note that this behaviour can be generalized to any linear combination of commitments. It can be useful to unveil the difference $b-b'$ since it is 0 if and only if the commitments $b$ and $b'$ were equal.

\section{Tools}

The main challenge is proving soundness of the ZKPs in this paper. The approach is to translate the analysis into the language of 2-player entangled games and use tools from \cite{Chailloux_2017} to prove soundness. We start with the definition of a game.

\begin{definition}\label{defin:game}
    A game $G=(I_A,I_B,O_A,O_B,V)$ is defined by
    \begin{itemize}
        \item 2 input sets $I_A$,$I_B$ which are respectively Alice’s and Bob’s input sets.
        \item 2 output sets sets $O_A,O_B$ which are respectively Alice’s and Bob’s output sets.
        \item A valuation function $V : I_A \times I_B \times O_A \times O_B \to \{0, 1\}$ which indicates whether the game is won for some fixed inputs and outputs. The game is won if the value of $V$ is 1.
    \end{itemize}
\end{definition}

Next we define the game $G_{coup}$ from $G$.
\begin{definition}\label{defin:gamecoup}
    For any game $G = (I_A,I_B,O_A,O_B,V)$ on the uniform distribution we define $G_{coup}$ as follows:
    \begin{itemize}
        \item  Alice receives a random $x \in I_A$. Bob receives a random pair of different inputs $(y,y')$ from $I_B$. 
        \item Alice outputs $a \in O_A$. Bob outputs $b,b' \in O_B$.
        \item They win the game if $V(x,y,a,b)=V(x,y',a,b')=1.$
    \end{itemize}
\end{definition}

The twist with $G_{coup}$ is that the second prover must answer two questions at once, making it more difficult for Alice and Bob to win $G_{coup}$. In fact, the maximum probabilities of winning the two games are related. If we denote the maximum probability of winning a game $G$ among all quantum strategies by $\omega^*(G)$, then we have the following result, proved in \cite{Chailloux_2017}.

\begin{proposition}\label{prop:bound1}
    For any game $G$ on the uniform distribution which is $S$-projective, we have $$\omega^*(G_{coup})\geq \frac{1}{64S}\left( \omega^*(G)-\frac{1}{|I_B|} \right)^3.$$
\end{proposition}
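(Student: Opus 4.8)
The plan is to convert an optimal quantum strategy for $G$ into a strategy for $G_{coup}$ in which Bob performs his two measurements sequentially. Fix a strategy for $G$ achieving (essentially) $\omega^*(G)$ on a shared state $|\psi\rangle$; after Naimark dilation we may take Alice's measurements $\{A^x_a\}$ and Bob's measurements $\{B^y_b\}$ to be projective, and by $S$-projectivity we may further assume each winning-answer projector $Q^{x,y,a}:=\sum_{b\,:\,V(x,y,a,b)=1}B^y_b$ is a sum of at most $S$ of the operators $B^y_b$. In $G_{coup}$, Alice on input $x$ measures $\{A^x_a\}$ and outputs $a$; Bob on input $(y,y')$ measures $\{B^y_b\}$, records the outcome $b$, then measures $\{B^{y'}_{b'}\}$ on the collapsed state, records $b'$, and outputs $(b,b')$.

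Next I would evaluate the success probability of this strategy. Let $\rho_{x,a}$ denote the normalized state on Bob's side conditioned on Alice getting outcome $a$ on input $x$. Conditioned on $(x,a,y,y')$, the probability that both of Bob's answers are valid equals $\sum_{b\text{ valid}}\mathrm{Tr}\big(B^y_b\,Q^{x,y',a}\,B^y_b\,\rho_{x,a}\big)$. Since this sum has at most $S$ terms and $\sum_{b\text{ valid}}B^y_b=Q^{x,y,a}$, Cauchy--Schwarz gives $\mathrm{Tr}\big(Q^{x,y,a}Q^{x,y',a}Q^{x,y,a}\rho_{x,a}\big)=\|Q^{x,y',a}Q^{x,y,a}\sqrt{\rho_{x,a}}\|_{\mathrm{HS}}^2\le S\sum_{b\text{ valid}}\mathrm{Tr}\big(B^y_b Q^{x,y',a}B^y_b\rho_{x,a}\big)$, so the conditional success probability is at least $\tfrac1S\,\mathrm{Tr}\big(Q^{x,y,a}Q^{x,y',a}Q^{x,y,a}\rho_{x,a}\big)$. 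Averaging over $x$ uniform, $a$ from Alice's measurement, and over ordered pairs $y\neq y'$, we obtain
$$\omega^*(G_{coup})\ \ge\ \frac1S\;\mathbb{E}_{(x,a)}\,\mathbb{E}_{y\neq y'}\Big[\,\mathrm{Tr}\big(Q^{x,y,a}Q^{x,y',a}Q^{x,y,a}\,\rho_{x,a}\big)\,\Big].$$

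The remaining task, and the heart of the argument, is to bound this expectation below by $\tfrac1{64}\big(\omega^*(G)-1/|I_B|\big)^3$. Fix $(x,a)$, write $\rho=\rho_{x,a}$, $Q^y=Q^{x,y,a}$, $g(y,y')=\mathrm{Tr}(Q^yQ^{y'}Q^y\rho)\ge0$ and $p_{x,a}=\mathbb{E}_y\,\mathrm{Tr}(Q^y\rho)$, noting $g(y,y)=\mathrm{Tr}(Q^y\rho)$ and $\mathbb{E}_{(x,a)}[p_{x,a}]=\omega^*(G)$. First I would move from ordered distinct pairs to independent pairs via $\mathbb{E}_{y\neq y'}[g]=\tfrac{|I_B|\,\mathbb{E}_{y,y'}[g]-\mathbb{E}_y[g(y,y)]}{|I_B|-1}$ (using $g\ge0$); this is where the $-1/|I_B|$ is created. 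Then, with $M=\mathbb{E}_{y'}Q^{y'}$ so that $\mathbb{E}_{y,y'}[g]=\mathbb{E}_y\,\mathrm{Tr}(Q^yMQ^y\rho)$, I would lower-bound this quantity by iterated applications of operator Cauchy--Schwarz together with a variance bound of the form $\langle M^2\rangle\ge\langle M\rangle^2$, producing a cube in $p_{x,a}$ minus the collision correction. Finally I would recombine the transcripts using convexity of $t\mapsto t^3$ on $[0,\infty)$ to replace $p_{x,a}$ by $\mathbb{E}_{(x,a)}[p_{x,a}]=\omega^*(G)$; the numerical slack (a factor $4$ incurred across the collision/Jensen steps, then cubed) is what fills out the constant $\tfrac1{64}$.

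The step I expect to be the main obstacle is this last one, and specifically forcing the collision correction to land \emph{inside} the cube --- getting $\big(\omega^*(G)-1/|I_B|\big)^3$ rather than the much weaker $\omega^*(G)^3-O(1/|I_B|)$ that a careless chain of inequalities yields. The symmetric rank-one example $Q^y=|v_y\rangle\langle v_y|$ with the $|v_y\rangle$ placed at a common angle around $|\phi\rangle$ shows the correction is genuinely present and fixes its size, so the estimate must remain tight in the regime where $|\phi\rangle$ is nearly an eigenvector of $M$ (precisely where the crude bound $\langle M^2\rangle\approx\langle M\rangle^2$ is lossless but useless). I expect one handles that regime separately, exploiting positive semidefiniteness of the Gram matrix of the post-measurement vectors $Q^y|\phi\rangle$, which forces $\mathbb{E}_{y\neq y'}\langle v_y|v_{y'}\rangle\ge-1/(|I_B|-1)$ and supplies exactly the missing $1/|I_B|$ shift.
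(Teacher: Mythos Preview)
The paper does not contain a proof of this proposition. Immediately before stating it, the authors write that it is ``the following result, proved in \cite{Chailloux_2017}'', and they supply no argument of their own; only the classical analogue (Proposition~\ref{prop:bound}) is proved in the paper. There is therefore no proof here against which to compare your proposal.

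For what it is worth, the paper does summarize the shape of the cited argument in one sentence: a theorem bounding the error from two consecutive measurements on a quantum state, applied to a strategy for $G_{coup}$ in which Bob runs on both inputs consecutively with no rewinding. Your proposal follows exactly this outline --- sequential measurement of $\{B^y_b\}$ then $\{B^{y'}_{b'}\}$ on the collapsed state --- and your reduction via $S$-projectivity and Cauchy--Schwarz to the lower bound $\tfrac{1}{S}\,\mathbb{E}\,\mathrm{Tr}(Q^{x,y,a}Q^{x,y',a}Q^{x,y,a}\rho_{x,a})$ is correct and is where the $1/S$ enters. The part you yourself flag as the main obstacle, namely extracting the factor $\tfrac{1}{64}\big(\omega^*(G)-1/|I_B|\big)^3$ from this expression, is precisely the content of the consecutive-measurement lemma in \cite{Chailloux_2017}; your sketch of that step (distinct-vs-independent pairs, operator Cauchy--Schwarz, Jensen on the cube, a separate rank-one regime) lists plausible ingredients but does not constitute a proof, and the present paper offers nothing further to fill it in.
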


Note that $S$-projective refers to the definition below:

\begin{definition}\label{defin:proj}
    Let $S\in\N$. We say that a game is $S$-projective if Bob has at most $S$ possible outputs to win the game for any possible $x,y,a$. So 
    \begin{align*}
        \max_{x,y,a} |\{b \mid V(x, y,a, b) = 1\}| \leq S.
    \end{align*}
\end{definition}
We give our own analogous classical result to aid the reader in understanding Proposition \ref{prop:bound1}. 

\begin{proposition}\label{prop:bound}
    For any game $G$ with questions asked uniformly at random and $I_B=\{0,1\}$, the classical winning probabilities have the following relationship: $2 \omega(G)-1\leq \omega(G_{coup})$.
\end{proposition}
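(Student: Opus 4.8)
The plan is to build, out of an optimal classical strategy for $G$, an explicit strategy for $G_{coup}$ whose success probability is already at least $2\omega(G)-1$; since $\omega(G_{coup})$ is the supremum over all strategies, this suffices. First I would reduce to deterministic strategies: the winning probability of a cooperative game is linear (hence convex) in the players' distribution over deterministic strategy pairs, so an optimum for $G$ is attained by a deterministic pair, namely a function $a\colon I_A\to O_A$ for Alice together with two fixed answers $b_0,b_1\in O_B$ for Bob (one for each of his two possible questions $0$ and $1$). For $y\in\{0,1\}$ let $p_y$ denote the probability, over a uniformly random $x\in I_A$, that $V(x,y,a(x),b_y)=1$. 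Because questions are asked uniformly and $I_B=\{0,1\}$, the definition of $\omega(G)$ gives $\omega(G)=\tfrac12(p_0+p_1)$ for this strategy.

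Next I would transport the strategy to $G_{coup}$. The key observation is that when $I_B=\{0,1\}$ the only pair of distinct inputs Bob can receive is $\{0,1\}$, so his question in $G_{coup}$ carries no information: up to permuting his two outputs to match the order in which he receives the pair, Bob's only genuine choice is again a pair $(b_0,b_1)$. So let Alice play the same $a$ and let Bob output $(b_0,b_1)$. By Definition~\ref{defin:gamecoup}, this strategy wins $G_{coup}$ on input $x$ precisely when $V(x,0,a(x),b_0)=1$ \emph{and} $V(x,1,a(x),b_1)=1$ both hold.

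Finally I would lower-bound the probability that both of those events occur. By the union bound, for a uniformly random $x$,
\begin{align*}
    \Pr\big[\,V(x,0,a(x),b_0)=1\ \wedge\ V(x,1,a(x),b_1)=1\,\big]\ \geq\ p_0+p_1-1\ =\ 2\omega(G)-1,
\end{align*}
and since $\omega(G_{coup})$ is at least the success probability of this particular strategy, the bound $\omega(G_{coup})\geq 2\omega(G)-1$ follows. There is no real obstacle in this argument; the only points that need a little care are the reduction to a deterministic optimal strategy for $G$ and the remark that Bob's input in $G_{coup}$ is forced when $|I_B|=2$, which is exactly why a winning strategy for $G$ induces a single well-defined candidate strategy for $G_{coup}$.
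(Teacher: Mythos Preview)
Your proof is correct and follows essentially the same route as the paper: transport an optimal classical strategy for $G$ to a strategy for $G_{coup}$ (the paper phrases this as ``rewinding Bob'' to foreshadow the quantum discussion) and then lower-bound the joint success probability over a uniform $x$ by the union-bound inequality, which the paper writes in vector form as $(\vec{1}-v)\cdot(\vec{1}-w)\geq 0$. Your explicit reduction to a deterministic strategy is a clean way to sidestep any ambiguity about products of expectations, but the underlying argument is the same.
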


We will present our own proof of this result.

\begin{proof}
    Suppose Alice and Bob can win $G$ with probability $\omega(G)$. Then we will design a strategy for $G_{coup}$ that wins with probability at least $2 \omega(G)-1$. On input $(x,(y,y'))$, Alice and Bob do the following:
    \begin{itemize}
        \item Alice runs on $x$ and outputs $a$
        \item Bob runs on $y$ and outputs $b$
        \item Bob is re-winded
        \item Bob runs on $y'$ and outputs $b'$.
    \end{itemize}
    
    Note that since Bob is entirely classical, he can be re-winded without any issues. Now, we analyze the probability that both answers are correct. In other words, what is the probability that $V(x,y,a,b)=1$ and $V(x,y',a,b')=1$? 
    
    Letting $n:=|I_A|$, we denote $v,w\in [0,1]^n$ to be the probability vectors such that 
    \begin{align*}
        v_x=\mathop{\mathbb{E}}[V(x,0,A(x),B(0))]\quad \text{and}\quad w_x=\mathop{\mathbb{E}}[V(x,1,A(x),B(1))]
    \end{align*} 
    where $A(x),B(y)$ are Alice and Bob's outputs on input $x$ and $y$. The expectation is taken over Alice and Bob's random coin flips. This allows us to write the probability of winning $G$ as 
    \begin{align*}
        \omega(G)=\mathop{\mathbb{E}}_{x,y}[V(x,y,A(x),B(y))]=\frac{v\cdot \vec{1}+w\cdot \vec{1}}{2n}.
    \end{align*}
    
    Then since winning $G_{coup}$ requires succeeding for both inputs, we get 
    \begin{align*}
        \omega(G_{coup})\geq\mathop{\mathbb{E}}_{x}[V(x,0,A(x),B(0))V(x,1,A(x),B(1))]= \frac{v\cdot w}{n}.
    \end{align*} 
    
    Then by noting that each entry of $v$ and $w$ is between 0 and 1, we can obtain the result as follows:
    \begin{align*}
        &0 \leq (\vec{1}-v)\cdot (\vec{1}-w) =v\cdot w-v\cdot \vec{1}-w\cdot \vec{1} +\vec{1}\cdot \vec{1} =v\cdot w-2n\omega(G) +n\\
        \implies & 2\omega(G)-1 \leq \frac{v\cdot w}{n} \leq \omega(G_{coup}).
    \end{align*}
    \qed
\end{proof}

The above proposition for the classical case relies on rewinding Bob. In the quantum case, rewinding cannot always be done after measuring the output $b$ since this would allow for copying of quantum states. In \cite{Chailloux_2017}, a proof is given for a theorem that bounds the amount of error that arises from making two consecutive measurements on a quantum state. This theorem is applied to a strategy for $G_{coup}$ that involves running Bob on both inputs consecutively with no rewinding in between. This yields the analogous quantum result to Proposition \ref{prop:bound} which is used in the analysis of soundness of the two ZKPs presented in this paper. 

\section{A ZKP for the Subset Sum Problem}\label{sec:SSP}

\subsection{General Idea}

In this section, we'll introduce a two-prover ZKP of the Subset Sum Problem that relies on the homomorphic properties of the relativistic bit commitment. We'll start by stating the problem:

\textbf{Subset Sum Problem: } Given a set $\{s_1,...,s_n\}$ of positive integers, is there a subset that sums to $k$? A solution is a binary vector $v\in \mathbb{F}_2^n$ for which $\sum_{i=1}^n v_i s_i=k$. This version of the subset-sum problem is NP-complete \cite{10.5555/1051910}.

The intuition for why the following protocol works is quite simple. Imagine Alice is trying to convince Bob that the set $\{1,4,5,7,8\}$ has a subset that sums to $14$, without giving away the subset. She sets up two rows of $5$ upside-down cups on a table, and in the $i$th column, one cup has no marbles and the other cup has $s_i$ marbles. This is done randomly and independently so that Bob does not know which cup is empty in each column. At this stage, the upside-down cups may have an arrangement of marbles similar to Figure ~\ref{fig:marbleorig}.

\begin{figure}[ht!]
    \centering
    \includegraphics[scale=0.25]{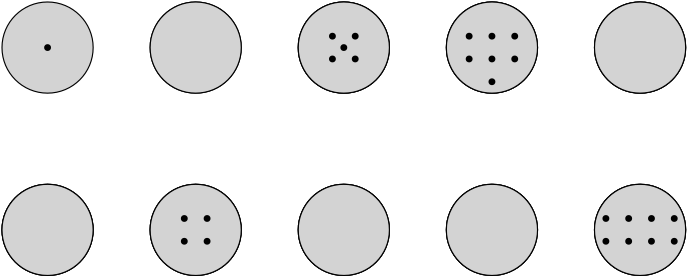}
    \caption{Original arrangement of marbles\label{fig:marbleorig}}
\end{figure}

Next, Bob can ask Alice either to lift each cup up one-by-one in order for him to check that Alice set it up properly, or he can ask her to slide one cup from each column to the edge of the table, then knock those cups into a bowl in order for him to count the $14$ marbles. This second option is represented in Figure ~\ref{fig:marble}. 

\begin{figure}[ht!]
    \centering
    \includegraphics[scale=0.2]{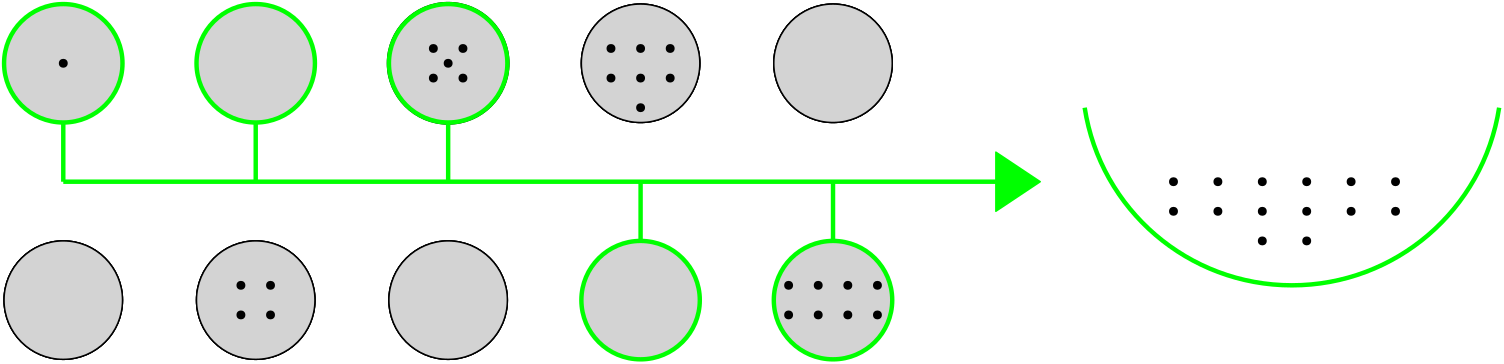}
    \caption{Action of Alice if Bob asks to see that the marbles sum to $k$\label{fig:marble}}
\end{figure}

Since the cups are knocked into the bowl quickly, he has no idea which marbles came from which cups, but he can count the number of marbles in total. By repeating this experiment multiple times with new random arrangements in each round, Bob will gain confidence that there is a subset that sums to $14$, but he will never get any clues to which subset it may be.

\subsection{Protocol}

We describe the notation in Table \ref{tab:SSPNot} that will appear in the protocol. The operations of the protocol will be over the field $\mathbb{F}_Q$ for some large prime $Q$ that exceeds the sum of all the elements in the set $s$. This way, the integers of the set $s$ and their sums can be interpreted as values of $\mathbb{F}_Q$. 

\begin{table}[h]
\centering
\begin{tabular}{||c  c  c||} 
 \hline
 \qquad Notation\qquad\qquad & \qquad\qquad Purpose \qquad\qquad & \qquad Data-Type \qquad \\ [0.5ex] 
 \hline\hline
 $c_0,c_1$ & keys for rows of cups & $\mathbb{F}^n_Q$\\
 \hline
 $w_0,w_1$ & encryption of rows of cups & $\mathbb{F}^n_Q$\\
 \hline
 $a$ & randomly chosen by V & $\mathbb{F}_Q$\\
 \hline
 $s$ &  input set & $\mathbb{F}_Q^n$\\
 \hline
 $k$ & subset sum target & $\mathbb{F}_Q$ \\
 \hline
 $v$ & solution to the problem & $\mathbb{F}^n_2$ \\
 \hline
 $x$ & indicates cups for solution & $\mathbb{F}^n_2$\\
 \hline
 $z$ & indicates empty cups & $\mathbb{F}^n_2$\\
 \hline
 $c'$ & sum of keys of cups of $x$ & $\mathbb{F}_Q$\\
 \hline
 $chall$ & randomly chosen by V & $\mathbb{F}_2$\\
 \hline
 $\cdot$ & scalar multiplication & operation\\
 \hline
 $*$ & entry-wise multiplication & operation\\
 \hline
\end{tabular}
\centering
\caption{Notation used in ZKP for Subset Sum}
\label{tab:SSPNot}
\end{table}

The operation $\cdot$ is scalar multiplication defined entry-wise: $a\cdot (x_1,...,x_n):= (ax_1,...,ax_n)$. We now introduce a ZKP between two provers and verifiers. Assuming the two provers know a witness $v$, they will be able to convince two verifiers that there exists a solution to the given Subset Sum Problem instance. Before engaging in a round of the protocol, P1 and P2 share random vectors $c_0,c_1\in\mathbb{F}_Q^n$ and $z\in\mathbb{F}_2^n$. The protocol is described in Table \ref{tab:SSPPro}. 

\begin{table}[h]
\centering
\begin{tabular}{||p{0.9\linewidth}||}  
 \hline
 Two-Prover, Two-verifier Subset Sum ZK protocol \\ [0.5ex] 
 \hline
    1. V1 sends P1 a random value $a\in\mathbb{F}_Q$.\\[1ex]
    2. P1 replies with $w_0=a\cdot (s*z)+c_0$ and $w_1=a\cdot (s* \overline{z})+c_1$.\\[1ex]
    3. V2 sends P2 $chall\in\{0,1\}$.\\[1ex]
    4. If $chall=0$, then P2 sends V2 $z,c_0,c_1$. If $chall=1$, then P2 will send to V2 the binary vector $x=v\oplus z$ and the value $c'=\sum\limits_{i=1}^n (c_{x_i})_i$.\\[1ex]
    5. After the round, if $chall=0$, then the verifiers confirm that $w_0=a\cdot (s* z)+c_0$ and $w_1=a\cdot (s* \overline{z})+c_1$. If instead $chall=1$, then the verifiers check that $\sum\limits_{i=1}^n (w_{x_i})_i=ak+c'$. If a check fails, they reject, otherwise they accept.\\[2ex] 
 \hline
\end{tabular}
\centering
\caption{ZKP for Subset Sum}
\label{tab:SSPPro}
\end{table}

\subsection{Proof of Security}

\begin{proposition}\label{prop:SSPComp}
    The ZK Subset Sum protocol has perfect completeness.
\end{proposition}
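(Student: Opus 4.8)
The plan is to verify directly that, when P1 and P2 follow the protocol honestly using a genuine witness $v$ with $\sum_{i=1}^n v_i s_i = k$ and both share the prescribed random $c_0,c_1\in\mathbb{F}_Q^n$ and $z\in\mathbb{F}_2^n$, the verifiers' checks in step 5 pass with certainty for both values of $chall$. Since the verifier's only sources of rejection are those two equality checks, this gives acceptance probability exactly $1$.

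First I would dispatch the case $chall=0$. Here P2 sends the shared values $z,c_0,c_1$, and the verifiers recompute $a\cdot(s*z)+c_0$ and $a\cdot(s*\overline{z})+c_1$. Since P1 formed $w_0$ and $w_1$ in step 2 from exactly these shared values and the $a$ that V1 broadcast in step 1, the two equalities hold identically. There is nothing to prove here beyond noting that the provers' shared randomness is consistent.

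The substantive case is $chall=1$. Working entry-wise, $(w_0)_i = a s_i z_i + (c_0)_i$ and $(w_1)_i = a s_i \overline{z}_i + (c_1)_i$, so $(w_{x_i})_i = a s_i b_i + (c_{x_i})_i$, where $b_i = z_i$ if $x_i=0$ and $b_i = \overline{z}_i$ if $x_i=1$. The key step is to show $b_i = v_i$ for every $i$: since $x_i = v_i \oplus z_i$, a short case split on $v_i\in\{0,1\}$ (and, within each, on $z_i$) shows that the cup selected by $x_i$ contains $s_i$ marbles exactly when $v_i=1$ and none when $v_i=0$ — equivalently, $z$ marks the empty cup in each column and $x=v\oplus z$ picks the full cup precisely on the columns in the subset. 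Summing over $i$ then yields $\sum_{i=1}^n (w_{x_i})_i = a\sum_{i=1}^n s_i v_i + \sum_{i=1}^n (c_{x_i})_i = ak + c'$, using the witness equation $\sum_i s_i v_i = k$ and the definition $c' = \sum_i (c_{x_i})_i$, so the verifiers' check succeeds.

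As both branches succeed with probability $1$, perfect completeness follows. The only place that requires any thought is the identity $b_i = v_i$ in the $chall=1$ branch — the bookkeeping relating the XOR mask $z$ to which cup in each column is empty — and even that reduces to a two-case verification; everything else is substitution.
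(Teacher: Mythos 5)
Your proof is correct and follows essentially the same route as the paper: dispatch $chall=0$ by direct substitution, then verify the $chall=1$ check by computing $\sum_i (w_{x_i})_i$ entry-wise and observing that the coefficient of $a s_i$ picked out by $x_i$ equals $v_i$. The only cosmetic difference is that you establish $\overline{x_i}z_i + x_i\overline{z_i} = v_i$ by an explicit case split on $(v_i, z_i)$, whereas the paper folds the same identity into a single algebraic line in its chain of equalities; the content is identical.
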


\begin{proof}
    Suppose the provers have a solution $v\in\mathbb{F}_2^n$ and random shared vectors $c_0,c_1\in\mathbb{F}_Q^n$ and $z\in\mathbb{F}_2^n$. Then it is clear that if steps (2) and (4) are followed properly by the provers, then the verification in (5) for $chall=0$ will pass. On the other hand, if $chall=1$, then the checks will still pass since we have
    \begin{align*}
        \sum\limits_{i=1}^n (w_{x_i})_i &=\sum\limits_{i=1}^n \overline{x_i}(w_0)_i+x_i (w_1)_i\\
        &=\sum\limits_{i=1}^n \overline{x_i}(a s_i z_i+(c_0)_i)+x_i (as_i \overline{z_i}+(c_1)_i)\\
        &= \sum\limits_{i=1}^n as_i (\overline{x_i}z_i + x_i \overline{z_i}) +(\overline{x_i}(c_0)_i +x_i (c_1)_i)\\
        &=\sum\limits_{i=1}^n as_i v_i+\sum\limits_{i=1}^n(\overline{x_i}(c_0)_i +x_i (c_1)_i)\\
        &=a\sum\limits_{i=1}^n s_i v_i+\sum\limits_{i=1}^n (c_{x_i})_i\\
        &=ak+c'.
    \end{align*}
    
    Therefore all the checks by the verifiers will pass, so the probability that the provers are accepted by the verifiers is 1.
    \qed    
\end{proof}

\begin{proposition}
    The Subset Sum ZKP is sound against malicious quantum provers with soundness exponentially close to $\frac{1}{2}$ in a single round.
\end{proposition}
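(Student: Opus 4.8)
The plan is to recast one round of the protocol as a two‑player entangled game $G$ and feed it to Proposition~\ref{prop:bound1}. Put $I_A=\mathbb{F}_Q$ (the value $a$ that V1 sends), $I_B=\{0,1\}$ (the challenge), let Alice play P1 and Bob play P2, let $O_A$ carry P1's pair $(w_0,w_1)$ and $O_B$ carry whichever unveiling message P2 sends, and let $V$ be the accept/reject decision of step~(5). Then $\omega^*(G)$ is exactly the optimal acceptance probability of malicious quantum provers in a single round on a no‑instance, so the goal is to show $\omega^*(G)$ is at most $\tfrac12$ plus a negligible term. Both inputs are uniform, and $G$ is $S$‑projective with $S=2^n$: once $a$, the challenge, and $(w_0,w_1)$ are fixed, the only replies that pass step~(5) are, for $chall=0$, the $2^n$ triples $\big(z,\ w_0-a\cdot(s*z),\ w_1-a\cdot(s*\overline z)\big)$ indexed by $z\in\mathbb{F}_2^n$, and, for $chall=1$, the $2^n$ pairs $\big(x,\ \sum_i (w_{x_i})_i-ak\big)$ indexed by $x\in\mathbb{F}_2^n$.

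The core step is to prove $\omega^*(G_{coup})\le 1/Q$ when the instance has no subset summing to $k$. In $G_{coup}$, Bob receives the pair $(0,1)$ and must output both a $chall=0$ reply $(z,c_0,c_1)$ and a $chall=1$ reply $(x,c')$, each passing step~(5) against the same $a$ and the same $(w_0,w_1)$ that Alice generated from $a$. Passing the $chall=0$ checks forces $(w_0)_i=as_iz_i+(c_0)_i$ and $(w_1)_i=as_i\overline{z_i}+(c_1)_i$ for all $i$; substituting these into $\sum_i(w_{x_i})_i$ and using $\overline{x_i}z_i+x_i\overline{z_i}=x_i\oplus z_i$ and $\overline{x_i}(c_0)_i+x_i(c_1)_i=(c_{x_i})_i$ gives $\sum_i(w_{x_i})_i=a\sum_i s_i(x_i\oplus z_i)+\sum_i(c_{x_i})_i$, which together with the $chall=1$ check $\sum_i(w_{x_i})_i=ak+c'$ rearranges to $a\big(\sum_i s_i(x_i\oplus z_i)-k\big)=c'-\sum_i(c_{x_i})_i$. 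Since $x\oplus z\in\mathbb{F}_2^n$ is a candidate subset and no subset sums to $k$, and since $Q$ exceeds both $k$ and $\sum_i s_i$, the coefficient $\sum_i s_i(x_i\oplus z_i)-k$ is a nonzero element of $\mathbb{F}_Q$; hence $a$ equals a fixed function $f$ of Bob's outputs alone. Because P2 never receives $a$, the reduced state on P2's side — and therefore the distribution of P2's outputs — is independent of the uniformly random $a$, so $\Pr[\text{win }G_{coup}]\le\Pr_a[a=f(\text{P2's outputs})]=1/Q$.

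Combining the two bounds through Proposition~\ref{prop:bound1} with $S=2^n$ and $|I_B|=2$: either $\omega^*(G)\le\tfrac12$ and we are done, or
\begin{align*}
\frac1Q\ \ge\ \omega^*(G_{coup})\ \ge\ \frac{1}{64\cdot 2^n}\Big(\omega^*(G)-\tfrac12\Big)^3 ,
\end{align*}
so $\omega^*(G)\le\tfrac12+\big(64\cdot 2^n/Q\big)^{1/3}$ in all cases. Taking the prime $Q$ exponentially large — for a security parameter $\kappa$, any $Q$ with $\log_2 Q\ge n+3\kappa+6$, still of polynomial bit‑length — makes the error term at most $2^{-\kappa}$, so one round has soundness error $\tfrac12+2^{-\kappa}$, exponentially close to $\tfrac12$. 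I expect the main obstacle to be the middle paragraph: pushing the algebra of the two simultaneous unveilings until it pins $a$ down to a function of P2's messages (this is exactly where the homomorphic structure and the ``no subset sums to $k$'' hypothesis are both indispensable), and then arguing cleanly that P2, lacking $a$, cannot reproduce it with probability above $1/Q$. A secondary subtlety is that $G$ is only $2^n$‑projective, which is the reason $Q$ must be taken exponentially large for the final inequality to be meaningful.
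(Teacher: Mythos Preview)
Your proof is correct and follows essentially the same route as the paper's: formalize the round as a two‑player game, observe it is $2^n$‑projective, show that winning $G_{coup}$ lets P2 compute $a$ from its own outputs (using the no‑solution hypothesis to invert the coefficient), cap $\omega^*(G_{coup})$ at $1/Q$ by no‑signalling, and finish via Proposition~\ref{prop:bound1}. If anything, you are slightly more explicit than the paper in justifying why $\sum_i s_i(x_i\oplus z_i)-k$ is nonzero in $\mathbb{F}_Q$ (via the size of $Q$) and in handling the trivial case $\omega^*(G)\le\tfrac12$.
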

The proof uses a technique similar to the proof of soundness in \cite{Chailloux_2017}. Here are the steps that will be taken in the formal proof.
\begin{itemize}
    \item Formalize the game that cheating provers play. This allows us to analyse $G_{coup}$ from Definition \ref{defin:gamecoup} where the second prover must answer both challenges at once. Note that $G_{coup}$ is not zero-knowledge.
    \item Since the Subset Sum Problem instance has no solution, combining the answers for both challenges must not yield a solution. Using this, we construct a strategy for P2 to guess $a$ using basic modular arithmetic, assuming P2 can successfully answer both challenges.
    \item By no-signalling, the probability of P2 successfully guessing $a$ is not more than $\frac{1}{Q}$. This yields an upper bound on the probability that P2 can answer both challenges, hence upper bounding $\omega^*(G_{coup})$. 
    \item We use Proposition~\ref{prop:bound1} from \cite{Chailloux_2017} to relate the winning probability of $G_{coup}$ with the winning probability of the ZKP for the Subset Sum Problem for cheating provers.\\
\end{itemize}

\begin{proof}
    We define the game $G^{SS}$ so that it satisfies Definition \ref{defin:game}.
    \begin{itemize}
        \item P1 receives value $a\in \mathbb{F}_Q$, and P2 receives $chall\in \{0,1\}$.
        \item P1 outputs values $w_0,w_1\in \mathbb{F}_Q^n$. If $chall=0$, then P2 outputs $z\in \mathbb{F}_2^n$ and $c_0,c_1\in \mathbb{F}_Q^n$. If $chall=1$, then P2 outputs a value $c'\in \mathbb{F}_Q$ and a vector $x\in \mathbb{F}_2^n$.
        \item If $chall=0$, then the two players win if $w_0=a\cdot (s * z)+c_0$ and $w_1=a\cdot (s* \overline{z})+c_1$. If $chall=1$, then players win if $\sum\limits_{i=1}^n (w_{x_i})_i=ak+c'$. 
    \end{itemize}
    
    Recall Definition \ref{defin:proj}. The game $G^{SS}$ is $2^n$-projective since after $z$ or $x$ is chosen by P2, then the winning values of $c_1,c_2$ or $c'$ are fixed.  This can be seen by rearranging the equations in the last bullet-point. In order to upper bound $\omega^*(G^{SS})$, we consider the game $G^{SS}_{coup}$, given in Definition \ref{defin:gamecoup}. Our goal is to show that if cheating provers can win the game $G^{SS}_{coup}$, then P2 has a strategy to perfectly guess $a$, which should only happen with negligible probability. Fix an input/output pair $(a,(w_0,w_1))$ for P1, and consider the outputs for P2 for both inputs. For $chall=0$, we have
    \begin{align*}
        w_0=a\cdot s*z+c_0\text{  and  }w_1=a\cdot s*\overline{z}+c_1\\
        \implies w_b=a(\overline{b}\cdot s*z+b\cdot s*\overline{z})+c_b.
    \end{align*}
    
    For $chall=1$, we have 
    \begin{align*}
        \sum\limits_{i=1}^n (w_{x_i})_i=ak+c'.
    \end{align*} 
    
    Substituting the left hand side, then rearranging, we get
    \begin{align*}
        ak+c'=\sum\limits_{i=1}^n (w_{x_i})_i=\sum\limits_{i=1}^n a\cdot (\overline{x_i}s_iz_i+x_i s_i\overline{z_i})+(c_{x_i})_i \\
        \implies a\left(\sum\limits_{i=1}^n (\overline{x_i}s_iz_i+x_i s_i\overline{z_i})-k\right) =c'-\sum\limits_{i=1}^n (c_{x_i})_i.\\
    \end{align*}
    
    CLAIM: $\sum\limits_{i=1}^n (\overline{x_i}s_iz_i+x_i s_i\overline{z_i})\neq k$.\\
    
    Assume by contradiction that we have equality. Then we will construct a solution to the Subset Sum Problem, however since we are proving soundness in the case of a dishonest prover, we have implicitly assumed this to not be possible.
    The candidate solution is $v':=x\oplus z$. Indeed, we have 
    \begin{align*}
        \sum_{i=1}^n v'_i s_i= \sum_{i=1}^n (x\oplus z)_i s_i= \sum_{i=1}^n (\overline{x_i} z_i+x_i\overline{z_i} )s_i=\sum_{i=1}^n \overline{x_i} z_i s_i + x_i\overline{z_i}s_i=k.
    \end{align*}
    
    Therefore we have constructed a solution $v'$, hence $k$ cannot be equal to the sum.\\
    
    By the result of the claim, we can divide and obtain
    \begin{align*}
        \implies a &=\left(c'-\sum\limits_{i=1}^n (c_{x_i})_i\right)\left(\sum\limits_{i=1}^n (\overline{x_i}s_iz_i+x_i s_i\overline{z_i})-k\right)^{-1}.
    \end{align*}
    
    However, since the value on the right hand side is completely determined by the output of P2 and the value on the lefthand side is chosen uniformly at random for P1, this can happen with probability at most $\frac{1}{Q}$ by no-signalling. This means that $\omega^*(G^{SS}_{coup})\leq \frac{1}{Q}$. Now, we can apply Proposition \ref{prop:bound1} from \cite{Chailloux_2017} to obtain 
    \begin{align*}
        \omega^*(G^{SS})\leq \frac{1}{2}+\left(\frac{64\cdot 2^n}{Q}\right)^{1/3}.
    \end{align*}
    
    If we take $Q\geq 64\cdot 2^{n+3K}$, then the protocol has soundness $\frac{1}{2}+2^{-K}$. Lastly, we verify that there is only a polynomial amount of communication needed for the protocol. In (1), V1 sends $\log(Q)$ bits, then in (2), P1 sends $2n\log(Q)$ bits, then in (3), only 1 bit is sent, and finally in (4), $n+2n\log(Q)$ bits are sent if $chall=0$, and if $chall=1$, then $n+\log(Q)$ bits are sent. Therefore each step requires only a polynomial number of bits in $n$ and $K$ since $\log(Q)\in O(n+K)$.
    \qed
\end{proof}

\begin{proposition}
    The Subset Sum ZKP has perfect zero knowledge against malicious quantum verifiers.
\end{proposition}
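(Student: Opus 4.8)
The plan is to construct a simulator that, for any (possibly malicious, possibly quantum) verifier strategy $V^* = (V_1^*, V_2^*)$, produces a transcript distributed identically to a real interaction with honest provers, without access to a witness $v$. Since this is a two-prover protocol, the simulator must respect the no-signalling structure: it should produce $V_1^*$'s view and $V_2^*$'s view consistently, but crucially the provers P1 and P2 are separated, so the simulator is allowed to use the challenge $chall$ (extracted from $V_2^*$) when it produces P1's message only if this does not break the spatial-separation model — and indeed the standard trick here is that the simulator, unlike the real provers, \emph{is} allowed to see both $a$ and $chall$ before answering, which is exactly what makes simulation possible. So first I would have the simulator run $V_1^*$ to obtain $a$ and run $V_2^*$ to obtain $chall$.

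Next, I would split into the two cases. If $chall = 0$, the simulator picks a uniformly random $z \in \mathbb{F}_2^n$ and uniformly random $c_0, c_1 \in \mathbb{F}_Q^n$, sets $w_0 = a\cdot(s*z)+c_0$ and $w_1 = a\cdot(s*\overline z)+c_1$, and outputs $(a, w_0, w_1, chall, z, c_0, c_1)$. This is trivially distributed exactly as in the honest run, since in the honest run $z, c_0, c_1$ are also uniform and independent and $w_0, w_1$ are the same deterministic function of them. If $chall = 1$, the simulator picks a uniformly random $x \in \mathbb{F}_2^n$ and a uniformly random $c' \in \mathbb{F}_Q$, then picks $w_0, w_1 \in \mathbb{F}_Q^n$ uniformly at random subject to the single linear constraint $\sum_{i=1}^n (w_{x_i})_i = ak + c'$, and outputs $(a, w_0, w_1, chall, x, c')$. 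The key observations to verify are: (i) in the honest run $x = v \oplus z$ is uniform over $\mathbb{F}_2^n$ because $z$ is uniform and independent of $v$; (ii) conditioned on $x$, the value $c' = \sum_i (c_{x_i})_i$ is uniform over $\mathbb{F}_Q$ because the $c_j$ entries are uniform; and (iii) conditioned on $x$ and $c'$, the pair $(w_0, w_1)$ is uniform subject to exactly the constraint $\sum_i (w_{x_i})_i = ak + c'$ — this follows because the $n$ unconstrained coordinates of $(w_0, w_1)$ (the entries $(w_{\overline{x_i}})_i$) are masked by the fresh uniform randomness in the unused $c$-entries, and the constrained coordinates are pinned by completeness (Proposition~\ref{prop:SSPComp}). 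Putting these together, the joint distribution of $(a, w_0, w_1, chall, x, c')$ in the real interaction matches the simulator's output exactly.

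Finally I would note that the argument is unaffected by $V^*$ being quantum: the simulator produces a classical transcript with the correct distribution, and feeds the messages to $V^*$ in the same order the real provers would, so $V^*$'s final (quantum) state is identical in both cases; since the distributions are \emph{identical}, not merely close, we get perfect zero knowledge. One should also remark that the simulator runs in polynomial time, as it only samples from uniform distributions over $\mathbb{F}_Q$-vectors and solves one linear equation.

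The main obstacle is the case $chall = 1$: one has to be careful that sampling $(w_0, w_1)$ uniformly subject to the linear constraint genuinely reproduces the real distribution, where $w_0, w_1$ are determined by $z$ and $c_0, c_1$ and then $x, c'$ are derived quantities. The cleanest way to handle this is to reparametrize — argue that in the honest experiment the tuple $(x, c', \{(w_{\overline{x_i}})_i\}_i)$ is uniform and independent over its domain, and that $\{(w_{x_i})_i\}_i$ is then the unique solution to completeness — so that re-sampling in the opposite order yields the same law. I would want to make sure no subtle correlation between $x$ and the "free" half of $(w_0,w_1)$ is overlooked, since that is where an error in a perfect-ZK proof of this kind would typically hide.
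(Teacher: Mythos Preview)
Your core statistical claims are right: $(w_0,w_1)$ is uniform because $c_0,c_1$ act as one-time pads, and $x=v\oplus z$ is uniform because $z$ masks $v$. But there is an ordering gap. The paper proves zero-knowledge in the model of a \emph{single} verifier that may query the two provers adaptively in either order; in particular (the paper's Case~1) it may send $a$, receive $(w_0,w_1)$, and only \emph{then} choose $chall$ as a function of $(w_0,w_1)$. Your simulator explicitly ``runs $V_2^*$ to obtain $chall$'' before producing $(w_0,w_1)$, which is circular in that model---the ``standard trick'' of reading both questions first only applies when the verifier must commit to both before seeing any answer, and the paper deliberately does not assume this. The paper's simulator therefore reverses your sampling order: since the marginal of $(w_0,w_1)$ is uniform over $\mathbb{F}_Q^{2n}$ regardless of $chall$, it first places a uniformly random $(w_0,w_1)$ in the $R_1$ register, then applies $V_2^*$ to extract $chall$, and only then manufactures P2's response (for $chall=1$, by picking $x$ uniformly and setting $c':=\sum_i (w_{x_i})_i-ak$ deterministically). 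This is the same joint law you describe, just sampled in an order compatible with an adaptive verifier.

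There is also a slip in your last paragraph: you write that ``$\{(w_{x_i})_i\}_i$ is then the unique solution to completeness,'' but completeness is a \emph{single} linear equation in those $n$ coordinates, so only their sum is fixed, not the individual coordinates. Your simulator description in the second paragraph (sample $(w_0,w_1)$ uniformly subject to one linear constraint) is the correct one; the reparametrization you sketch at the end would drop $n-1$ degrees of freedom and does not give a bijection with the honest randomness $(z,c_0,c_1)$.
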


\begin{proof}
    We'll show that this protocol is zero knowledge in the model of 2-provers and a single quantum verifier. We will give the verifier the freedom to query the provers in any order, without needing to respect relativistic timing restraints. Proving the ZK property in this model will be even stronger since the model gives as much power to the malicious verifier as possible. We model a cheating verifier $V^*$ as two families of circuits $(V_1^*,V_2^*)$, where $V_i^*$ takes as input a sub-view and outputs the message to prover $i$ for an instance of size $n$. Since the verifier can ask the questions in either order, we have two cases:
    \begin{itemize}
        \item Case 1: $V_2^*$ depends on the interaction with P1.
        \item Case 2: $V_1^*$ depends on the interaction with P2.
    \end{itemize}
    
    Since the cases are treated very similarly, we will just present the proof of Case 1. In other words, we'll consider a verifier that queries P1 and waits for a response before querying P2. Our first step is to describe the view when $V^*$ interacts with two honest provers. Then, we will define a simulator that can create the same view using only query access to $V^*$ despite not having any access to provers. Since these two views will be the same, then the protocol will be zero-knowledge.
    
    The view will consist of classical registers $Q_1$ and $Q_2$ that will hold the questions that will be asked to P1 and P2. Also, the responses of the two provers will be stored in classical registers $R_1$ and $R_2$. In addition, the verifier will hold a private quantum register $V$. In addition, we will adopt the notation $D(\psi):=|\psi\rangle\langle\psi|$ for quantum states to avoid the need to write $\psi$ twice.\\
    
    \textbf{Case 1 with Honest Prover}
     
    We assume that the operation of $V_2^*$ will depend on the interaction with P1.
    At the beginning of the protocol, the verifier's view is an auxiliary state $\sigma_0:=\rho_V$. 
    
    Next, after the verifier's first message, the view is
    \begin{align*}
        \sigma_1:=V_1^*(\rho_V)=\sum_{a\in \mathbb{F}_Q} p_{a}D(a)_{Q_1}\otimes \rho(a)_V.
    \end{align*}

    Here, $p_a$ is the probability of $a$ being the query for P1, and $\rho(a)_V$ is the verifier's private quantum state after sending $a$ to P1. Following the response of P1, the verifier's view is

    \begin{align*}
        \sigma_2:=\frac{1}{Q^{2n}}\frac{1}{2^n}\sum_{c_0,c_1\in \mathbb{F}_Q^n}\sum_{z\in \mathbb{F}_2^n}\sum_{a\in \mathbb{F}_Q} p_{a} D(Y(z,c_0,c_1))_{R_1}\otimes      D( a)_{Q_1}\otimes \rho(a)_V.
    \end{align*}
    where $Y(z,c_0,c_1):=(a\cdot s* z+c_0,a\cdot s* \overline{z}+c_1)$. Next, the verifier sends the challenge, which can be influenced by all that has happened up to this point. In other words, the verifier applies the circuit $V_2^*$ to the view $\sigma_2$. The view becomes 
    \begin{align*}
        \sigma_3:=\frac{1}{Q^{2n}}\frac{1}{2^n} &\sum_{c_0,c_1\in \mathbb{F}_Q^n}\sum_{z\in \mathbb{F}_2^n}\sum_{a\in \mathbb{F}_Q} \sum_{chall\in\{0,1\}} p_{a,chall}  D(Y(z,c_0,c_1))_{R_1}\\
        &\otimes D(chall)_{Q_2} \otimes D(a)_{Q_1}\otimes \rho(a,chall,Y(z,c_0,c_1))_V.
    \end{align*}
    
    After the final message, we have two cases. If the challenge is 0, then P2 sends $z,c_0,c_1$ in the second response register. On the other hand, if the challenge is 1, then that register should instead contain a value $c'$ and a binary vector $x$. In particular, an honest prover will choose $x=v\oplus z$ and $c'=\sum\limits_{i=1}^n (c_{x_i})_i$, yielding
    \begin{align*}
        \sigma_4:=\frac{1}{Q^{2n}}\frac{1}{2^n} &\sum_{c_0,c_1\in \mathbb{F}_Q^n}\sum_{z\in \mathbb{F}_2^n}\sum_{a\in \mathbb{F}_Q} D(Y(z,c_0,c_1))_{R_1} \otimes D(a)_{Q_1}\otimes\\
        &\Big(p_{a,0}D(0)_{Q_2} \otimes D(z,c_0,c_1)_{R_2} \otimes \rho(a,0,Y(z,c_0,c_1))_V\\
        &+p_{a,1}D(1)_{Q_2} \otimes D(v\oplus z,c')_{R_2} \otimes \rho(a,1,Y(z,c_0,c_1))_V\Big).
    \end{align*}
    
    This is the final view for an honest prover. Now, before moving onto the simulator, we will rewrite this final state so that it will resemble the simulator's final state later on. The first step is to let $(w_0,w_1):=Y(z,c_0,c_1)$ and sum over $w_0,w_1$ instead of $c_0,c_1$. This means that $c_0=w_0-a\cdot s*z$ and $c_1=w_1-a\cdot s*\overline{z}$. So far, this gives us
    \begin{align*}
        \sigma_4=\frac{1}{Q^{2n}} &\frac{1}{2^n} \sum_{w_0,w_1\in \mathbb{F}_Q^n}\sum_{z\in \mathbb{F}_2^n}\sum_{a\in \mathbb{F}_Q} D(w_0,w_1)_{R_1} \otimes D(a)_{Q_1}\otimes\\
        &\Big(p_{a,0}D(0)_{Q_2} \otimes D(z,w_0-a\cdot s*z,w_1-a\cdot s*\overline{z})_{R_2} \otimes \rho(a,0,w_0,w_1)_V\\
        &+p_{a,1}D(1)_{Q_2} \otimes D(v\oplus z,c')_{R_2} \otimes \rho(a,1,w_0,w_1)_V\Big).\\
    \end{align*}
    
    The next step is to move the sum over $z$ past the terms on which it does not act. We obtain
    \begin{align*}
        \sigma_4=\frac{1}{Q^{2n}} & \sum_{w_0,w_1\in \mathbb{F}_Q^n}\sum_{a\in \mathbb{F}_Q} D(w_0,w_1)_{R_1} \otimes D(a)_{Q_1}\otimes \Big(p_{a,0}|0\rangle\langle 0|_{Q_2} \otimes\\
        &\frac{1}{2^n} \sum_{z\in \mathbb{F}_2^n} D(z,w_0-a\cdot s*z,w_1-a\cdot s*\overline{z})_{R_2} \otimes \rho(a,0,w_0,w_1)_V\\
        &+p_{a,1}D(1)_{Q_2} \otimes\frac{1}{2^n} \sum_{z\in \mathbb{F}_2^n} D(v\oplus z,c')_{R_2} \otimes \rho(a,1,w_0,w_1)_V\Big).\\
    \end{align*}
    
    The final step is to rename $x:=v\oplus z$ in the third line and sum over $x$ instead. We can also use the relation from Proposition ~\ref{prop:SSPComp} to obtain $c'=\sum\limits_{i=1}^n (w_{x_i})_i-k a$. Putting this together, we get 
    \begin{align*}
        \sigma_4=\frac{1}{Q^{2n}} & \sum_{w_0,w_1\in \mathbb{F}_Q^n}\sum_{a\in \mathbb{F}_Q} D(w_0,w_1)_{R_1} \otimes D(a)_{Q_1}\otimes \Big(p_{a,0}D(0)_{Q_2} \otimes\\
        &\frac{1}{2^n} \sum_{z\in \mathbb{F}_2^n} D(z,w_0-a\cdot s*z,w_1-a\cdot s*\overline{z})_{R_2} \otimes \rho(a,0,w_0,w_1)_V\\
        &+p_{a,1}D(1)_{Q_2} \otimes\frac{1}{2^n} \sum_{x\in \mathbb{F}_2^n} D(x,\sum\limits_{i=1}^n (w_{x_i})_i-k a)_{R_2} \otimes \rho(a,1,w_0,w_1)_V\Big).\\
    \end{align*}
    
    \textbf{Case 1 with Simulator}
    
    Now, we describe how to simulate the views of the verifier without the help of any provers. We will denote the $i$th simulated view as $\sigma'_i$. Since the simulator has access to $V_1^*$ and $\rho_V$, then $\sigma_0$ and $\sigma_1$ are straightforward to simulate. Intuitively, no effort is required at this stage because the prover has not acted yet. For $\sigma_2$, the response from P1 is two uniformly random vectors $w_0,w_1$ since $c_0$ and $c_1$ act as one-time pads. Then
    \begin{align*}
        \sigma_2'=\frac{1}{Q^{2n}}\sum_{w_0,w_1\in \mathbb{F}_Q^n}\sum_{a\in \mathbb{F}_Q} p_{a}  D(w_0,w_1)_{R_1}\otimes      D(a)_{Q_1}\otimes \rho(a)_V.
    \end{align*}
    
    This can be created from $\sigma_1'$ by tensoring with the maximally mixed state in register $R_1$. Next, the simulator applies $V_2^*$ to $\sigma_2'$ to get $\sigma_3'$:
    \begin{align*}
        \sigma_3'=\frac{1}{Q^{2n}} &\sum_{w_0,w_1\in \mathbb{F}_Q^n}\sum_{a\in \mathbb{F}_Q} \sum_{chall\in\{0,1\}} p_{a,chall}  D(w_0,w_1)_{R_1}\\
        &\otimes D(chall)_{Q_2} \otimes D(a)_{Q_1}\otimes \rho(a,chall,w_0,w_1)_V.
    \end{align*}
    
    The final step is to simulate $\sigma_4$. Again, this depends on the challenge bit $chall$. 
    \begin{itemize}
        \item For $chall=0$ in register $Q_2$, the simulator chooses a random $z\in \mathbb{F}_2^n$ and then puts $D(z,w_0-a\cdot s * z,w_1-a\cdot s * \bar{z})$ in register $R_2$.
        \item For $chall=1$ in register $Q_2$, the simulator chooses a random $x\in \mathbb{F}_2^n$ and then places $D(x, \sum\limits_{i=1}^n (w_{x_i})_i-k a)$ in register $R_2$.
    \end{itemize}
    
    This gives a final view of 
    \begin{align*}
        \sigma_4'&= \frac{1}{Q^{2n}} \sum_{w_0,w_1\in \mathbb{F}_Q^n}\sum_{a\in \mathbb{F}_Q^n}   D(w_0,w_1)_{R_1}\otimes D(a)_{Q_1}\otimes\\
        &\Big(p_{a,0}D(0)_{Q_2}\otimes \frac{1}{2^n}\sum_{z\in \mathbb{F}_2^n}D(z,w_0-a\cdot s * z,w_1-a\cdot s * \bar{z})_{R_2}\otimes\rho(a,0,w_0,w_1)_V\\
        &+p_{a,1}D(1)_{Q_2}\otimes \frac{1}{2^n}\sum_{x\in \mathbb{F}_2^n}D(x, \sum\limits_{i=1}^n (w_{x_i})_i-k a)_{R_2}\otimes\rho(a,1,w_0,w_1)_V\Big).\\
    \end{align*}

    Now, it is clear that the two views $\sigma_4$ and $\sigma_4'$ are equal, hence our simulator succeeded.
    
    \textbf{Case 2}
    The proof of Case 2 is very similar and is left out. 
    \qed
\end{proof}

\subsection{Analysis of Efficiency}

Now we analyze the efficiency of the Subset Sum ZKP. The best known quantum algorithm for the Subset Sum Problem has time complexity $O(2^{n/3})$ \cite{https://doi.org/10.4230/lipics.esa.2022.6}. In order to take roughly time $2^{100}$ to solve, the instance must therefore be of size $n\geq 300$. The prime $Q$ is chosen such that $Q>64\cdot 2^{n+3K}$ with $K=5$ so that soundness is $\frac{1}{2}+2^{-K}\approx 0.53$. Then the expected number of bits sent in each round is 
\begin{align*}
    \log_2(Q)+1+2n\log_2(Q)+(n+\frac{2n\log_2(Q)+\log_2(Q)}{2})\approx 290,000   
\end{align*} 
where the terms correspond to the number of bits sent by V1, V2, P1, P2, in that order. This means that each round requires roughly 36KB of communication. This is far less than the 1.89MB required by \cite{Chailloux_2017}, and comparable to the 17KB required by the protocol of \cite{https://doi.org/10.48550/arxiv.2112.01386}. Note however that the Subset Sum ZKP has a lower soundness error than \cite{https://doi.org/10.48550/arxiv.2112.01386} (roughly $\frac{1}{2}$ compared to $\frac{2}{3}$), meaning that less rounds are required. To reduce the total soundness error to $2^{-100}$, our Subset Sum ZKP would require only 110 rounds $(0.53^{110}\approx 2^{-100})$, whereas 
\cite{https://doi.org/10.48550/arxiv.2112.01386} would require 170 rounds $(0.67^{170}\approx 2^{-100}$). Putting this together, the total information sent in our protocol is roughly 3.96MB (36KB per round with 110 rounds), compared to 2.89MB of total communication for \cite{https://doi.org/10.48550/arxiv.2112.01386} (17KB per round with 170 rounds). This means that our ZKP is very comparable to \cite{https://doi.org/10.48550/arxiv.2112.01386} in terms of total information sent. 

One important consideration is the complexity of computation for the provers. We will first motivate this by explaining the experimental setup, which is depicted in Figure \ref{fig:distance}. 

\begin{figure}[ht!]
    \centering
    \includegraphics[scale=0.22]{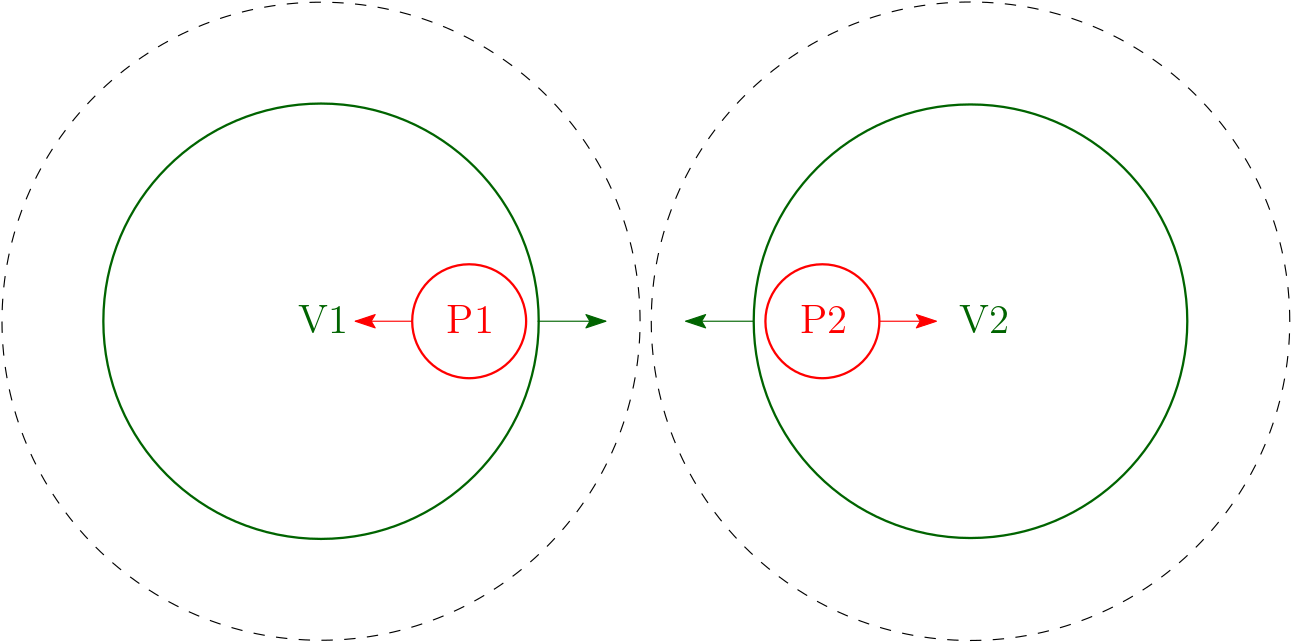}
    \caption{Message propagation in experimental setup \label{fig:distance}}
\end{figure}

Before the rounds, the verifiers position themselves a known distance apart. Then in every round, they each send their message to their respective prover. The provers each receive a message, perform some computation, then reply back to their respective verifier. In order to ensure that the provers are separated, the verifiers time the responses to ensure that the provers reply within fixed time limits. These time limits depend on the distance between the verifiers, and ensure that the provers are inside the dotted circles, hence separated. Importantly, the response of P2 is not influenced by the question of V1 and similarly, the response of P1 is not influenced by the question of V2. For example, after V1 sends the message $a$ intended for P1, then if P2 receives $a$ as well, then P2 will still not have enough time to reply to V2 with a message influenced by $a$. This is crucial to fulfill the assumption that each prover only has access to the question intended for them. For more details on the experimental setup, see \cite{Alikhani_2021} or \cite{https://doi.org/10.48550/arxiv.2112.01386}.

In theory, if all the messages travel towards the other parties at the speed of light, and each party can instantly reply when they receive a message, then as long as the provers stay within the dotted circles, then the protocol should run smoothly. However in practice, the messages may not follow such a direct path to their intended recipients, and the provers must take some time to perform internal computations once they have received their message before they can reply. Effectively, this means that the provers cannot be close to the edges of the dotted circles since it will not give them enough time to respond, and slower protocols require the verifiers to further increase their separation. Experimentally, large distances have been required by the verifiers in order to give the provers enough time to respond. For example, the small-scale setup in \cite{https://doi.org/10.48550/arxiv.2112.01386} required the verifiers to position themselves 400km apart, whereas in \cite{Alikhani_2021}, the verifiers were only 60m apart. These differences are a result of the amount of communication required in each round and the time complexity of the operations required by the provers. For this reason, we also seek to speed up the computations required by the honest provers in order to allow for more practical, small-scale uses of ZKPs. 

With the experimental setup explained, we can now compare multiplication complexities. While only three multiplications must be performed by the prover in \cite{https://doi.org/10.48550/arxiv.2112.01386}, their field $\mathbb{F}_Q$ is far larger than ours, chosen to have $2^{23209}-1$ elements. To their benefit, they only need to perform three multiplications per round, whereas in our protocol, $2n$ multiplications must be performed, albeit in a much smaller field of size roughly $2^{321}$. As an added bonus, our multiplications are all by the same constant $a$, hence parallelism is very straightforward. 

Multiplication can have varying time complexities depending on the implementation, and some algorithms contain large hidden constants. To simplify the comparison, tests were run using Python with and without parallelism (array multiplication in NumPy). In our brief experiments, our multiplications were roughly 3 times faster without parallelism and roughly 7 times faster with parallelism when compared to \cite{https://doi.org/10.48550/arxiv.2112.01386}. We leave a more thorough analysis for future work. 

In short, faster multiplication can allow the verifiers in the protocol to be at closer distances, making it more practical, and allowing the protocol to run faster. In Table \ref{tab:efficiency} we present a summary of the results by comparing our Subset Sum ZKP with other known quantum-secure ZKPs in the literature.

\begin{table}[h]
\centering
\begin{tabular}{||c  c  c  c||} 
 \hline
 \qquad Protocol \qquad & \qquad $\#$Bytes$/$Round \qquad & \qquad $\#$Rounds \qquad & \qquad Multiplication complexity \\ [0.5ex] 
 \hline\hline
 \cite{Alikhani_2021} & 2B & $10^{19}$ & negligible\\
 \hline
 \cite{Chailloux_2017} & 1.89MB & 177 & $40$ to $300\times$ slower\\
 \hline
 \cite{https://doi.org/10.48550/arxiv.2112.01386} & 17.03KB & 170 & $3$ to $7\times$ slower\\
 \hline
 Subset Sum ZKP & 36KB & 110 & Baseline\\
 \hline
\end{tabular}
\centering
\caption{Comparison of efficiency of known ZKPs}
\label{tab:efficiency}
\end{table}

Note that although the protocol of \cite{Alikhani_2021} has very efficient rounds, three provers are required to achieve quantum security, and the number of rounds is impractical. Note that the numbers presented here differ slightly from in \cite{Chailloux_2017}\footnote{ \cite{Chailloux_2017} requires 177 rounds because in \cite{https://doi.org/10.48550/arxiv.2112.01386}, $Q$ is chosen to be $10000n!=64n!2^{3k}$ so $k\approx 2.4$, hence soundness is $\frac{1}{2}+2^{-2.4}\approx 0.69$}  and in \cite{https://doi.org/10.48550/arxiv.2112.01386}\footnote{In \cite{https://doi.org/10.48550/arxiv.2112.01386}, 340 rounds were chosen for loss tolerance}.

One last point is that generating positive instances of the Subset Sum Problem is straightforward by sampling $n$ random numbers between 1 and $Q/n$, then determining the target $k$ by randomly choosing a subset of the $n$ numbers.

\section{A ZKP for 3-SAT}\label{sec:3SAT}

\subsection{General Idea}

3-SAT problem: Given a Boolean expression $\phi$ of the variables $x_1,...,x_n$ in conjunctive normal form with $m$ clauses of size $3$ (3-CNF), determine whether there is an assignment for the variables that satisfies $\phi$. For example, the variables may be $x_1,...,x_5$, and $\phi$ may be 
\begin{align*}
    \phi'\equiv (x_3\lor \neg x_2\lor x_5)\land (\neg x_1\lor \neg x_4\lor \neg x_5)\land (x_1\lor \neg x_2\lor x_5)\land (x_1\lor x_4\lor x_2).
\end{align*}

Then $\phi'$ has a satisfying assignment of $x=(1,0,1,0,0)$. This problem is known to be NP-complete \cite{AhoAlfredV1974Tdaa}.

Note that a witness to the a 3SAT instance is typically viewed as an assignment $x$ of the variables, but instead for this section, we will construct a witness differently. An alternate way to present a solution to the 3SAT problem is to provide a vector $e\in \{1,2,3\}^m$ that indicates the position of a 1 in each clause of $\phi$. This way, one could perform a linear scan through $\phi$ and determine the variable assignments. The only required check would be that the same variable does not obtain two different assignments based on two different clauses. For example, an $e\in \{1,2,3\}^4$ for $\phi'$ could be $e':=(1,2,1,1)$, though it is not unique. 

Many ZKPs require a randomized step to ensure that the verifier does not learn anything about the witness when the prover unveils an answer. Here, the randomization will be independent cyclic permutations of the variables in each clause.

\begin{definition}
    Let $\phi$ be a CNF with $m$ clauses, each of size 3. We define a cyclic CNF permutation $\Pi$ to be a collection of $m$ cyclic permutations on $\{1,2,3\}$. By $\Pi(\phi)$, we denote $\phi$ after each clause has been permuted by the corresponding permutation of $\Pi$. We define $CP$ to be the set of all cyclic CNF permutations on $\phi$.
\end{definition}

For example by choosing a random $\Pi\in CP$, with $\phi'$ as above, then $\Pi(\phi')$ could be the following:

\begin{align*}
    \Pi(\phi')\equiv (x_5\lor x_3\lor \neg x_2)\land (\neg x_4\lor \neg x_5\lor \neg x_1)\land (x_5\lor x_1\lor \neg x_2)\land (x_1\lor x_4\lor x_2).
\end{align*}

Note that simply re-arranging the contents of the CNF does not change the satisfiability of the formula since $\lor$ is commutative. In fact, the values of $e\in\{1,2,3\}^m$ must get adjusted according to $\Pi$. We write $\Pi(e)$ to denote the witness in $\{1,2,3\}^m$ for $\Pi(\phi)$ where each coordinate of $e$ has undergone the corresponding permutation in $\Pi$. For example, $\Pi(e')=(2,1,2,1)$. The idea with the protocol below is that the provers will commit to the satisfying assignment, as well as the formula $\Pi(\phi)$ with the satisfying assignment substituted in for the variables. For example, with the running example of $\phi'$, the provers would make the following commitment:

\begin{figure}[ht!]
    \centering
    \includegraphics[scale=0.22]{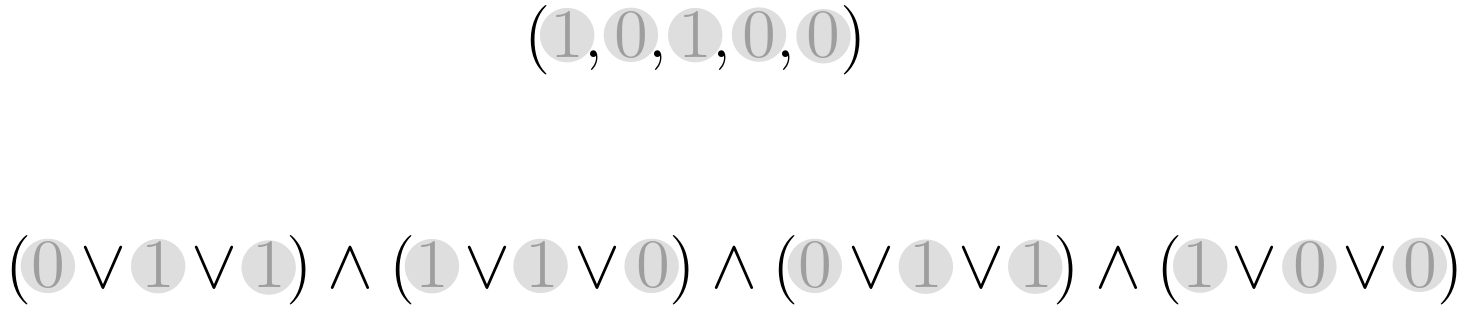}
    \caption{Commitment by provers\label{fig:3sat1}}
\end{figure}

Next, the verifiers can either challenge the provers to unveil a 1 in each clause, or to prove that the variable assignments were consistent. If the challenge was the former, then the verifiers would see Figure \ref{fig:3sat2}

\begin{figure}[ht!]
    \centering
    \includegraphics[scale=0.22]{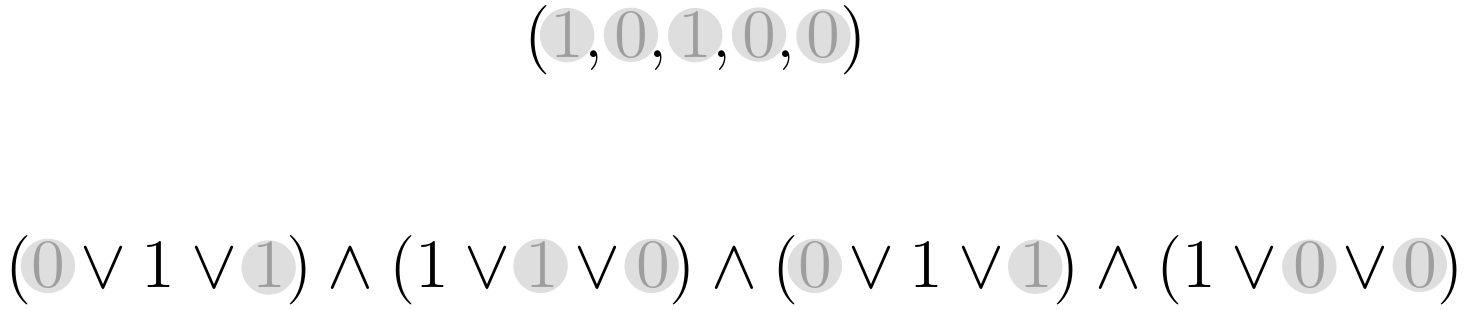}
    \caption{Provers unveil a 1 in each clause\label{fig:3sat2}}
\end{figure}

Note that the unveiled 1's are in the positions specified by $\Pi(e')$. On the other hand, if the verifiers asked to see that the variable assignments were consistent, then the provers would go through each variable of $\Pi(\phi)$ and prove that it's value is consistent with the satisfying assignment. Figure \ref{fig:3sat3} depicts this step for only the third clause to avoid clutter.

\begin{figure}[ht!]
    \centering
    \includegraphics[scale=0.13]{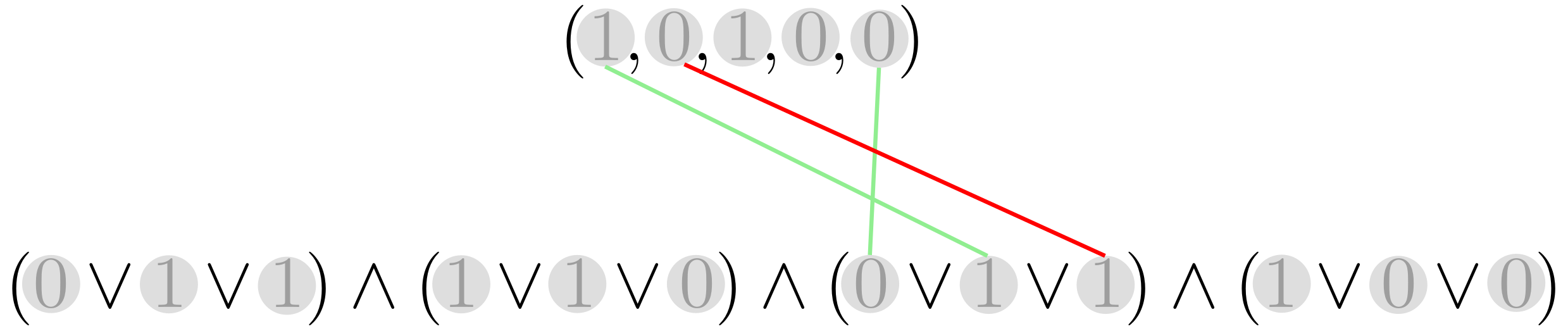}
    \caption{Provers show consistent variable assignments\label{fig:3sat3}}
\end{figure}

Note that the provers will use their ability to unveil the difference of commitments for the green lines, and use their ability to unveil the sum of commitments for the red lines. In particular, the difference of two equal bits is 0, and the sum of two different bits is 1. 

\subsection{Protocol}

First, we introduce notation that will be used throughout the protocol in Table \ref{tab:3SATNot}. The operations will be mainly addition and multiplication over the field $\mathbb{F}_Q$, however, we will also multiply elements of $\mathbb{F}_Q$ by bits, hence $Q$ may be any large prime power since every field has elements 0,1.

\begin{table}[h]
\centering
\begin{tabular}{||c  c  c||} 
\hline
\qquad Notation\qquad\qquad & \qquad\qquad Purpose \qquad\qquad & \qquad Data-Type \qquad \\ [0.5ex] 
 \hline\hline
 $\Pi$ & cyclic 3-CNF permutation & CP\\
 \hline
 $\phi$ & 3-CNF formula & CP\\
 \hline
 $s'$ & satisfying assignment of $\phi$ & $\mathbb{F}_2^n$\\
 \hline
 $p$ & bits of permuted formula $\Pi(\phi)$ & $\mathbb{F}_2^{3m}$\\
 \hline
 $c$ & key for $p$ & $\mathbb{F}_Q^{3m}$\\
 \hline
 $c'$ & key for $s'$ & $\mathbb{F}_Q^n$\\
 \hline
 $w$ & encryption of $p$ & $\mathbb{F}_Q^{3m}$\\
 \hline
 $w'$ & encryption of $s'$ & $\mathbb{F}_Q^{n}$\\
 \hline
 $e$ & indicates a 1 in each clause of $\phi$ & $\{1,2,3\}^m$\\
 \hline
 $chall$ & random challenge bit chosen by verifier & $\{0,1\}$\\
 \hline
\end{tabular}
\centering
\caption{Notation used in ZKP for 3SAT}
\label{tab:3SATNot}
\end{table}

Next, we introduce a ZKP between two provers and two verifiers. Assuming the two provers know a witness $s'\in \{0,1\}^n$, they will be able to convince two verifiers that there exists a solution to the given 3SAT problem instance. First, they compute an $e\in \{1,2,3\}^m$ from $s'$ that indicates the position of a 1 in each clause. Then, before each round, the provers share random $\Pi\in CP$, $c\in \mathbb{F}_Q^{3m}$, $c'\in \mathbb{F}_Q^{n}$. The protocol is given in Table \ref{tab:3SATPro}.

\begin{table}[h]
\centering
\begin{tabular}{||p{0.9\linewidth}||}  
 \hline
 Two-Prover, Two-verifier 3-SAT ZKP \\ [0.5ex] 
 \hline
    1. V1 sends P1 a random value $a\in\mathbb{F}_Q$.\\[1ex]
    2. P1 replies with $w'=a\cdot s' +c'$ and  $w=a\cdot p +c$.\\[1ex]
    3. V2 sends P2 $chall\in\{0,1\}$.\\[1ex]
    4. If $chall=0$, P2 first sends $\Pi$ to V2. Then for each $i=1,...,3m$: let $x_j$ be at position $i$ of $\Pi(\phi)$. P2 sends to V2:
    \begin{itemize}
        \item $c_{i}+c'_j$ if variable $x_j$ is negated at position $i$.
        \item $c_{i}-c'_j$ if variable $x_j$ is not negated at position $i$.
    \end{itemize}
    Instead if $chall=1$, then for each $i=1,...,m$, P2 sends $(\Pi(e)_i,c_{3(i-1)+\Pi(e)_i})$ to V2, unveiling a 1 in each clause. \\[1ex]
    5. After the round, if $chall=0$, then the verifiers confirm that for each $i=1,...,3m$, if variable $x_j$ is at position $i$ of $\Pi(\phi)$, then
    \begin{itemize}
        \item $w_i+w'_j=a+(c_{i}+c'_j)$ if variable $x_j$ is negated at position $i$.
        \item $w_i-w'_j=c_{i}-c'_j$ if variable $x_j$ is not negated at position $i$.
    \end{itemize}
    If instead $chall=1$, then the verifiers check that for each $i=1,...,m$, it is the case that $w_{3(i-1)+\Pi(e)_i}= a+c_{3(i-1)+\Pi(e)_i}$.\\[2ex] 
 \hline
\end{tabular}
\centering
\caption{ZKP for 3SAT}
\label{tab:3SATPro}
\end{table}
~\\
~\\

\subsection{Proof of Security}

\begin{proposition}
    The 3-SAT ZKP has perfect completeness.
\end{proposition}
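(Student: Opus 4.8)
The plan is to mirror the structure of the completeness proof for the Subset Sum ZKP (Proposition~\ref{prop:SSPComp}): assume the provers hold a genuine witness $s'\in\mathbb{F}_2^n$ with its associated $e\in\{1,2,3\}^m$ and shared randomness $\Pi\in CP$, $c\in\mathbb{F}_Q^{3m}$, $c'\in\mathbb{F}_Q^n$, have them follow steps (2) and (4) honestly, and then verify that every equation checked in step (5) holds as an algebraic identity over $\mathbb{F}_Q$, treated case by case on $chall$. The bookkeeping fact I would isolate first is that, with the honest choice of $p\in\mathbb{F}_2^{3m}$ in step (2), the entry $p_i$ is exactly the truth value under $s'$ of the literal occupying position $i$ of $\Pi(\phi)$; concretely, if variable $x_j$ sits at position $i$, then $p_i=s'_j$ when $x_j$ is not negated there and $p_i=\overline{s'_j}=1-s'_j$ when it is negated.

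For $chall=0$, I would fix a position $i$ and let $x_j$ be the variable there. If $x_j$ is not negated, then $p_i=s'_j$, so from $w=a\cdot p+c$ and $w'=a\cdot s'+c'$ we get $w_i-w'_j=a(p_i-s'_j)+(c_i-c'_j)=c_i-c'_j$, which is exactly the value P2 transmits and the equation the verifier checks. If $x_j$ is negated, then $p_i+s'_j=\overline{s'_j}+s'_j=1$, so $w_i+w'_j=a(p_i+s'_j)+(c_i+c'_j)=a+(c_i+c'_j)$, again matching the verifier's equation. Since this holds for every $i=1,\dots,3m$, the entire $chall=0$ branch passes.

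For $chall=1$, the point is that $e$ was computed so that it indicates a position holding a $1$ in each clause of $\phi$, and $\Pi(e)$ tracks where that position moves under the cyclic permutation; hence for each clause $i$ the literal at global position $3(i-1)+\Pi(e)_i$ of $\Pi(\phi)$ evaluates to true under $s'$, i.e. $p_{3(i-1)+\Pi(e)_i}=1$. Then $w_{3(i-1)+\Pi(e)_i}=a\cdot 1+c_{3(i-1)+\Pi(e)_i}=a+c_{3(i-1)+\Pi(e)_i}$, which is precisely the verifier's check. So the $chall=1$ branch also passes with certainty, and combining the two cases shows the honest provers are accepted with probability $1$.

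I do not anticipate a real obstacle: the argument is a routine identity verification with no probabilistic content. The only place that demands care is pinning down the conventions so that the four equation types in step (5) line up symbol-for-symbol with the honest messages of step (4) --- namely that $p_i$ denotes the truth value of the (possibly negated) literal at position $i$ of $\Pi(\phi)$, that the $r$-th literal of clause $i$ has global index $3(i-1)+r$, and that $\Pi(e)$ faithfully records where the satisfied literal lands after the cyclic permutation $\Pi$ is applied.
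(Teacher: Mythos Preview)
Your proposal is correct and matches the paper's own proof essentially line for line: both split on $chall$, use the observation that $p_i=s'_j$ (non-negated) or $p_i+s'_j=1$ (negated) to collapse the $chall=0$ identities, and use $p_{3(i-1)+\Pi(e)_i}=1$ for the $chall=1$ identity. Your explicit articulation of what $p_i$ encodes is slightly more detailed than the paper's, but the argument is the same.
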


\begin{proof}
    Assume that the two honest provers have a witness $s'$. Then if they follow the steps in the protocol with two honest verifiers, then when $chall=0$, the checks that the verifiers make will pass. Indeed, if $i\in\{1,...,3m\}$ and variable $x_j$ is negated at position $i$ in $\Pi(\phi)$, then $s'_j+p_i=1$, yielding
    
    \begin{align*}
        w_i+w'_j=(ap_i+c_i)+(as'_j+c'_j)= a(p_i+s'_j)+(c_i+c'_j)=a+(c_i+c'_j).
    \end{align*}
    
    If instead variable $x_j$ is not negated at position $i$ in $\Pi(\phi)$, then $s'_j=p_i$, meaning that
    
    \begin{align*}
        w_i-w'_j=(ap_i+c_i)-(as'_j+c'_j)= a(p_i-s'_j)+(c_i-c'_j)=c_i-c'_j.
    \end{align*}
    
    Next we confirm that the checks will pass when $chall=1$. The provers are honest, meaning each clause contains a 1, therefore $p_{3(i-1)+\Pi(e)_i}=1\; \forall i\in\{1,...,m\}$, hence
    \begin{align*}
        w_{3(i-1)+\Pi(e)_i}= ap_{3(i-1)+\Pi(e)_i}+c_{3(i-1)+\Pi(e)_i}=a+c_{3(i-1)+\Pi(e)_i}.
    \end{align*}
    
    Therefore all the checks by the verifiers will pass for both values of $chall$.
    \qed
\end{proof}

\begin{proposition}
    The 3-SAT ZKP is sound against malicious quantum provers with soundness exponentially close to $\frac{1}{2}$ in a single round.
\end{proposition}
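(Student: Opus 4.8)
The plan is to transcribe, step for step, the soundness argument already carried out for the Subset Sum ZKP. First I would recast one round of the protocol played by cheating provers as a two-player game $G^{3SAT}=(I_A,I_B,O_A,O_B,V)$ in the sense of Definition~\ref{defin:game}: P1 (Alice) receives $a\in\mathbb{F}_Q$ and outputs $(w',w)\in\mathbb{F}_Q^{n}\times\mathbb{F}_Q^{3m}$; P2 (Bob) receives $chall\in\{0,1\}$, and for $chall=0$ outputs a permutation $\Pi\in CP$ together with values $d_1,\dots,d_{3m}\in\mathbb{F}_Q$ (the $c_i\pm c'_j$ of step~4), while for $chall=1$ it outputs pairs $(\ell_i,f_i)_{i=1}^m$ with $\ell_i\in\{1,2,3\}$, $f_i\in\mathbb{F}_Q$. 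The valuation $V$ is exactly the step-5 checks. Since fixing $\Pi$ (resp.\ the $\ell_i$) forces every $d_i$ (resp.\ every $f_i$) through the corresponding linear equation, $G^{3SAT}$ is $3^m$-projective in the sense of Definition~\ref{defin:proj} (there are $|CP|=3^m$ permutations and $3^m$ vectors $\ell$). I would then pass to $G^{3SAT}_{coup}$ of Definition~\ref{defin:gamecoup}, in which P2 must answer both challenges for the \emph{same} committed pair $(w',w)$.

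The heart of the proof is the analogue of the CLAIM used for Subset Sum: if the provers win $G^{3SAT}_{coup}$, then P2 can compute $a$ from its own outputs and the public formula $\phi$. Fix P1's input--output pair $(a,(w',w))$ and assume both of P2's answers pass. Using the permutation $\Pi$ output for $chall=0$, every clause $i$ has, at the position $\pi(i):=3(i-1)+\ell_i$ unveiled for $chall=1$, a literal on a definite variable $x_{j(i)}$ with a definite sign. Combining the $chall=1$ equation $w_{\pi(i)}=a+f_i$ with the matching $chall=0$ equation — $w_{\pi(i)}+w'_{j(i)}=a+d_{\pi(i)}$ if $x_{j(i)}$ is negated at $\pi(i)$, and $w_{\pi(i)}-w'_{j(i)}=d_{\pi(i)}$ otherwise — yields $w'_{j(i)}=d_{\pi(i)}-f_i$ in the negated case and $w'_{j(i)}=a+f_i-d_{\pi(i)}$ in the non-negated case. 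Setting $\widehat{s}_{j(i)}:=1$ when the literal at $\pi(i)$ is non-negated and $\widehat{s}_{j(i)}:=0$ when it is negated defines a partial assignment that makes the chosen literal of \emph{every} clause true; were $\widehat{s}$ consistent it would extend to a satisfying assignment of $\phi$, contradicting that we are in the dishonest-prover case. Hence there are two clauses $i,i'$ pointing to the same variable with opposite signs, and equating their two expressions for $w'_{j(i)}=w'_{j(i')}$ cancels all the $w$'s, giving
\begin{align*}
    a=d_{\pi(i)}+d_{\pi(i')}-f_i-f_{i'},
\end{align*}
a value determined solely by P2's outputs and $\phi$.

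The remainder is routine. By no-signalling P2's outputs are independent of the uniformly random $a$ sent only to P1, so $\omega^*(G^{3SAT}_{coup})\le \tfrac1Q$. Feeding this, together with $S=3^m$ and $|I_B|=2$, into Proposition~\ref{prop:bound1} gives
\begin{align*}
    \omega^*(G^{3SAT})\le\frac12+\left(\frac{64\cdot 3^m}{Q}\right)^{1/3},
\end{align*}
so taking $Q\ge 64\cdot 3^m\cdot 2^{3K}$ makes the single-round soundness error $\tfrac12+2^{-K}$. Since $\log Q\in O(m+K)$, a tally of steps 1--4 (V1 sends $\log Q$ bits; P1 sends $(3m+n)\log Q$ bits; V2 sends one bit; P2 sends $m\log_2 3+3m\log Q$ bits for $chall=0$ and $m(\log_2 3+\log Q)$ bits for $chall=1$) shows all communication is polynomial in $m$, $n$, $K$.

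The step I expect to be the main obstacle is the combinatorial claim in the second paragraph. One must argue carefully that inconsistency of $\widehat{s}$ is the \emph{only} way out — it is precisely the fact that the chosen literal of every clause is forced true that prevents $\widehat{s}$ from being a consistent, hence satisfying, assignment — and that any such inconsistency linearizes into a formula for $a$ that is free of the $w$-variables, which requires treating the negated and non-negated literal cases separately (as above). Everything else, including the verification of $3^m$-projectivity and the message-length bookkeeping, is a direct analogue of the Subset Sum argument.
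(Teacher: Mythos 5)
Your proposal is correct and follows essentially the same path as the paper: identical game $G^{3SAT}$ and $3^m$-projectivity, the same reduction to $G^{3SAT}_{coup}$, the same combinatorial core (a conflict-free choice of unveiled literals would assemble into a satisfying assignment, contradiction), the same linear elimination yielding $a=\delta_{3(i-1)+f_i}+\delta_{3(i'-1)+f_{i'}}-\gamma_i-\gamma_{i'}$, the no-signalling bound $\omega^*(G^{3SAT}_{coup})\le 1/Q$, and the same application of Proposition~\ref{prop:bound1} with $Q\ge 64\cdot 3^m\cdot 2^{3K}$. The only differences are cosmetic (you isolate $w'_{j(i)}$ in each case before equating, whereas the paper sums the two equations to cancel $w'_j$, and your explicit partial-assignment $\widehat{s}$ is just a written-out version of the paper's ``follow the red and green lines'' argument).
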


\begin{proof}
    We define the game $G^{3SAT}$. 
    \begin{itemize}
        \item P1 receives value $a\in \mathbb{F}_Q$, and P2 receives $chall\in \{0,1\}$.
        \item P1 outputs values $w\in \mathbb{F}_Q^{3m}, w'\in \mathbb{F}_Q^{n}$. If $chall=0$, then P2 outputs a $\Pi\in CP$ and a value $\delta \in \mathbb{F}^{3m}_Q$.
        If $chall=1$, then P2 outputs $f\in\{1,2,3\}^m$, $\gamma \in \mathbb{F}_Q^m$.
        \item After the round, if $chall=0$, then the provers win if for each $i=1,...,3m$, it holds that
        \begin{itemize}
            \item $w_i+w'_j=a+\delta_i$ where variable $x_j$ appears negated at position $i$ of $\Pi(\phi)$,
            \item $w_i-w'_j=\delta_i$ where variable $x_j$ appears not negated at position $i$ of $\Pi(\phi)$.
        \end{itemize}
        If instead $chall=1$, then the provers win if for each $i=1,...,m$, it is the case that $w_{3(i-1)+f_i}= a+\gamma_{i}$.
    \end{itemize}

    Recall Definition \ref{defin:proj}. The game $G^{3SAT}$ is $3^m$-projective where since after P1 has output values $w,w'$, then P2 has $3^m$ choices for $\Pi$ and $3^m$ choices for $f$, and in both cases the second value ($\delta$ or $c$) will be uniquely determined. This can be seen by rearranging the equations in the last bullet-point. In order to upper bound $\omega^*(G^{3SAT})$, we consider the game $G^{3SAT}_{coup}$, given in Definition \ref{defin:gamecoup}. From a winning strategy for $G^{3SAT}_{coup}$, we'll devise a strategy for P2 to guess $a$ based solely on P2's local input $chall$ and output.
    
    Fix an input/output pair $(a,(w_0,w_1))$ for P1, and suppose P2 successfully answers both challenges. Note that constructing a satisfying assignment is implicitly assumed to be impossible since soundness is currently being considered. Then there must be a variable $x_j$ and two conflicting clauses $i,i'\in\{1,...,m\}$ such that $x_j$ is negated at position $3(i-1)+f_i$ of $\Pi(\phi)$ but not negated at position $3(i'-1)+f_{i'}$ of $\Pi(\phi)$. Otherwise, a solution could be formed by following the red and green lines from the clauses to the variable assignments. This situation is illustrated below in Figure \ref{fig:3sat4}.

    \begin{figure}[ht!]
        \centering
        \includegraphics[scale=0.29]{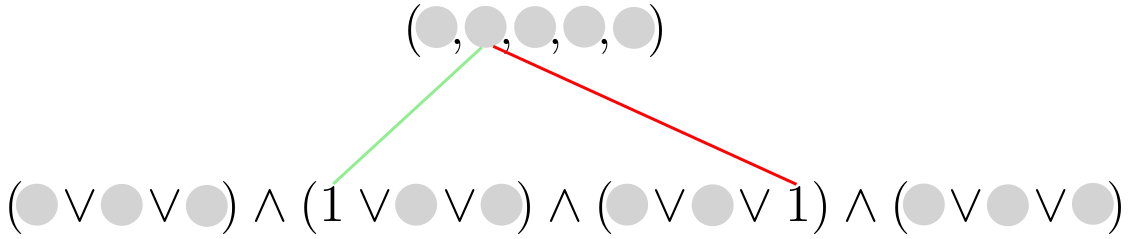}
        \caption{Conflicting variable assignments. The green line represents that the difference of the two bits was unveiled to be 0, and the red line means that the sum was unveiled to be 1\label{fig:3sat4}}
    \end{figure}
    
    Then the conflict between clauses $i,i'\in\{1,...,m\}$ for variable $x_j$ leads to the following.
    \begin{align*}
        w_{3(i-1)+f_i}+w'_j=a+\delta_{3(i-1)+f_i} \quad \text{ and }\quad w_{3(i'-1)+f_{i'}}-w'_j=\delta_{3(i'-1)+f_{i'}}
    \end{align*}
    
    Summing the contents of the two equations cancels out $w'_j$.
     \begin{align*}
        w_{3(i-1)+f_i}+w_{3(i'-1)+f_{i'}}=a+\delta_{3(i-1)+f_i}+\delta_{3(i'-1)+f_{i'}}
    \end{align*}
    
    Next, we use the equality from satisfying $chall=1$, and then finally isolate for $a$.
    \begin{align*}
        (a+\gamma_i)+(a+\gamma_{i'})=a+\delta_{3(i-1)+f_i}+\delta_{3(i'-1)+f_{i'}}\\
        \implies a=\delta_{3(i-1)+f_i}+\delta_{3(i'-1)+f_{i'}}-\gamma_{i}-\gamma_{i'}
    \end{align*}
    
    Note that the values on the right hand side are all decided by P2, hence P2 has a strategy to guess $a$. However, $P2$ can achieve this correlation with probability at most $\frac{1}{Q}$ by no-signalling, thus $\omega^*(G^{3SAT}_{coup})\leq \frac{1}{Q}$. Now, we can apply Proposition \ref{prop:bound1} from \cite{Chailloux_2017} to obtain    
    \begin{align*}
        \omega^*(G^{3SAT})\leq \frac{1}{2}+\left(\frac{64\cdot 3^m}{Q}\right)^{1/3}.
    \end{align*}
    
    If we take $Q\geq 64\cdot 2^{\log(3)m+3k}$, then the protocol has soundness $\frac{1}{2}+2^{-k}$. Next, there is a polynomial amount of communication in total since each step requires only a polynomial number of bits in $n$ and $m$ since $\log(Q)\in O(m)$. Indeed, in (1), V1 sends $\log(Q)$ bits, then in (2), P1 sends $(n+3m)\log(Q)$ bits, then in (3), only 1 bit is sent.  Finally in (4), if $chall=0$ then sending $\Pi$ takes at most $2m$ bits and sending the rest is $3mQ$ bits. On the other hand, if $chall=1$, then at most $m(Q+2)$ bits are sent. 
    \qed
\end{proof}

\begin{proposition}
    The 3SAT ZKP has perfect zero-knowledge against malicious quantum verifiers.
\end{proposition}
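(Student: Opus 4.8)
The plan is to mirror, step for step, the zero-knowledge argument already given for the Subset Sum ZKP. I would work in the strongest reasonable model: two honest provers P1, P2 interacting with a single quantum verifier $V^*=(V_1^*,V_2^*)$ who is free to query the provers in either order. This splits into the same two symmetric cases as before — Case 1, where $V_2^*$ may depend on the transcript with P1, and Case 2, where $V_1^*$ may depend on the transcript with P2 — so I would write out Case 1 in full and note that Case 2 is handled identically. The verifier's view is tracked through classical question registers $Q_1,Q_2$, classical response registers $R_1,R_2$, and a private quantum register $V$, and the goal is to exhibit a simulator with only query access to $V_1^*,V_2^*$ and the auxiliary state $\rho_V$ that reproduces the final view exactly.

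For Case 1 with honest provers I would compute the view through $\sigma_0=\rho_V$, then $\sigma_1=V_1^*(\rho_V)=\sum_a p_a D(a)_{Q_1}\otimes\rho(a)_V$, then $\sigma_2$ after P1 replies with $(w',w)=(a\cdot s'+c',\,a\cdot p+c)$ averaged over the shared randomness $\Pi\in CP$, $c'\in\mathbb{F}_Q^n$, $c\in\mathbb{F}_Q^{3m}$, then $\sigma_3$ after $V_2^*$ produces $chall$, and finally $\sigma_4$ after P2's response: for $chall=0$ the content of $R_2$ is $\Pi$ together with the vector $\delta$ whose $i$-th entry is $c_i+c'_j$ or $c_i-c'_j$ according to whether the variable $x_j$ occurring at position $i$ of $\Pi(\phi)$ is negated, and for $chall=1$ it is $\Pi(e)$ together with $\gamma_i=c_{3(i-1)+\Pi(e)_i}$. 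The crucial rewriting, exactly as in the Subset Sum proof, is (i) to change summation variables from $(c',c)$ to $(w',w)$, which are uniformly distributed because $c',c$ act as one-time pads, so that $\delta_i$ becomes $w_i+w'_j-a$ or $w_i-w'_j$ and $\gamma_i$ becomes $w_{3(i-1)+\Pi(e)_i}-a$; and (ii) in the $chall=1$ branch, to rename the uniformly random quantity $\Pi(e)$ as a fresh uniform $f\in\{1,2,3\}^m$, using that applying an independent uniformly random cyclic permutation to each coordinate sends any fixed $e$ to the uniform distribution on $\{1,2,3\}^m$. After this, everything placed in $R_2$ is a deterministic function of $(a,w',w,\Pi)$ or $(a,w',w,f)$ together with the public formula $\phi$.

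The simulator is then immediate. It reproduces $\sigma_0,\sigma_1$ verbatim; it produces $\sigma_2'$ from $\sigma_1'$ by tensoring with the maximally mixed state on $R_1$ (this is precisely the one-time-pad observation — the honest P1 message is uniform); it produces $\sigma_3'$ by applying $V_2^*$; and it produces $\sigma_4'$ by reading $chall$ from $Q_2$ and then, for $chall=0$, sampling a uniform $\Pi\in CP$ and writing $(\Pi,\delta)$ with $\delta_i=w_i+w'_j-a$ or $w_i-w'_j$ as dictated by the negation pattern of $\Pi(\phi)$, and for $chall=1$, sampling a uniform $f\in\{1,2,3\}^m$ and writing $(f,\gamma)$ with $\gamma_i=w_{3(i-1)+f_i}-a$. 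Comparing with the rewritten $\sigma_4$ gives $\sigma_4=\sigma_4'$ term by term, so the simulation is perfect, and Case 2 follows by the symmetric argument.

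I expect the only genuine obstacle to be the bookkeeping around the $chall=0$ branch, which is structurally richer than in the Subset Sum protocol: one must verify that the simulator, after sampling $\Pi$, knows exactly which positions of $\Pi(\phi)$ are negated (it does, from $\phi$ and $\Pi$), and hence can compute the correct $\delta_i$ from $(w,w',a)$, and one must also make the distributional claim that the honestly sent $\Pi(e)$ in the $chall=1$ branch is uniform on $\{1,2,3\}^m$ and independent of the verifier's view so far, so that it may be replaced by a freshly sampled $f$. Once the change of variables $(c',c)\mapsto(w',w)$ and this uniformity observation are in hand, the equality of the two final views is a routine coordinate-wise comparison, just as in the Subset Sum case.
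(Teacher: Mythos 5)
Your proposal matches the paper's proof step for step: same two-case model with a single quantum verifier, same register bookkeeping through $\sigma_0,\dots,\sigma_4$, same change of variables from $(c',c)$ to $(w',w)$ via the one-time-pad observation, same replacement of $\Pi(e)$ by a fresh uniform $f\in\{1,2,3\}^m$, and the same simulator that tensors $R_1$ with the maximally mixed state and then reconstructs $\delta$ or $\gamma$ deterministically from $(a,w',w)$ and the sampled $\Pi$ or $f$. This is essentially identical to the paper's argument, and the subtleties you flag (the negation-pattern bookkeeping and the uniformity of $\Pi(e)$) are exactly the ones the paper handles in the same way.
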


\begin{proof}
    We'll again use the model of 2-provers and a single quantum verifier, giving the verifier the freedom to query the provers in any order, without needing to respect relativistic timing restraints. As before, we model a cheating verifier $V^*$ as two families of circuits $(V_1^*,V_2^*)$, where $V_i^*$ takes as input a sub-view and outputs the message to prover $i$ for an instance of size $n$. Since the verifier can ask the questions in either order, we have two cases:
    
    \begin{itemize}
        \item Case 1: $V_2^*$ depends on the interaction with P1.
        \item Case 2: $V_1^*$ depends on the interaction with P2.
    \end{itemize}
    
    Since the cases are treated very similarly, we will just present the proof of Case 1. In other words, we'll consider a verifier that queries P1 and waits for a response before querying P2. Our first step is to describe the view when $V^*$ interacts with two honest provers. Then, we will define a simulator that can create the same view using only query access to $V^*$ despite not having any access to provers. Since these two views will be the same, then the protocol will be zero-knowledge.
    
    The view will consist of classical registers $Q_1$ and $Q_2$ that will hold the questions that will be asked to P1 and P2. Also, the responses of the two provers will be stored in classical registers $R_1$ and $R_2$. In addition, the verifier will hold a private quantum register $V$. In addition, we will adopt the notation $D(\psi):=|\psi\rangle\langle\psi|$ for quantum states to avoid the need to write $\psi$ twice.\\

    \textbf{Case 1 with Honest Prover}
     
    We assume that the operation of $V_2^*$ will depend on the interaction with P1.
    At the beginning of the protocol, the verifier's view is an auxiliary state $\sigma_0:=\rho_V$. 
    
    Next, after the verifier's first message, the view is
    \begin{align*}
        \sigma_1:=V_1^*(\rho_V)=\sum_{a\in \mathbb{F}_Q} p_{a}D(a)_{Q_1}\otimes \rho(a)_V.
    \end{align*}

    Here, $p_a$ is the probability of $a$ being the query for P1, and $\rho(a)_V$ is the verifier's private quantum state after sending $a$ to P1. Following the response of P1, the verifier's view is 
    \begin{align*}
        \sigma_2:=\frac{1}{Q^{n+3m}}\frac{1}{3^m}&\sum_{c'\in \mathbb{F}_Q^n,c\in \mathbb{F}_Q^{3m} }\sum_{\Pi\in CP}\\
        &\sum_{a\in \mathbb{F}_Q} p_{a} D(a\cdot s'+c',a\cdot p+c)_{R_1}\otimes D( a)_{Q_1}\otimes \rho(a)_V.
    \end{align*}
    
    Note that $p$ is uniquely determined by $\phi$, $\Pi$, and $s'$, so it is not part of the sum. Also, $|CP|=3^m$. Next, the verifier sends the challenge, which can be influenced by all that has happened up to this point. In other words, the verifier applies the circuit $V_2^*$ to the view $\sigma_2$. The view becomes
    
    \begin{align*}
        \sigma_3:=\frac{1}{Q^{n+3m}}\frac{1}{3^m} &\sum_{c'\in \mathbb{F}_Q^n,c\in \mathbb{F}_Q^{3m} }\sum_{\Pi\in CP}\sum_{a\in \mathbb{F}_Q} \sum_{chall\in\{0,1\}} p_{a,chall}  D(a\cdot s'+c',a\cdot p+c)_{R_1}\\
        &\otimes D(chall)_{Q_2} \otimes D(a)_{Q_1}\otimes \rho(a,chall,a\cdot s'+c',a\cdot p+c)_V.
    \end{align*}
    
    After the final message, we have two cases. If the challenge is 0, then P2 sends the CNF permutation $\Pi$ and a vector $\delta\in \mathbb{F}_Q^{3m}$ in the second response register. On the other hand, if the challenge is 1, then that register should instead contain vectors $\Pi(e)\in\{1,2,3\}^m$,$\gamma\in \mathbb{F}_Q^{m}$. In particular, an honest prover will choose $\delta_i=c_i\pm c_j'$ as described in step 4 of the protocol, and $\gamma_i=c_{3(i-1)+\Pi(e)_i}$ where $\Pi(e)_i$ indicates the position of a 1 in the $i$-th clause of $\Pi(\phi)$.
    
    \begin{align*}
        \sigma_4:=\frac{1}{Q^{n+3m}}\frac{1}{3^m} &\sum_{c'\in \mathbb{F}_Q^n,c\in \mathbb{F}_Q^{3m} }\sum_{\Pi\in CP}\sum_{a\in \mathbb{F}_Q} D(a\cdot s'+c',a\cdot p+c)_{R_1} \otimes D(a)_{Q_1}\otimes\\
        &\Big(p_{a,0}D(0)_{Q_2} \otimes D(\Pi,\delta)_{R_2} \otimes \rho(a,0,a\cdot s'+c',a\cdot p+c)_V\\
        &+p_{a,1}D(1)_{Q_2} \otimes D(\Pi(e),\gamma)_{R_2} \otimes \rho(a,1,a\cdot s'+c',a\cdot p+c)_V\Big)
    \end{align*}
    
    This is the final view when interacting with an honest prover. Now, before moving onto the simulator, we will rewrite this final state so that it will resemble the simulator's final state later on. The first step is to let $w':=a\cdot s'+c'$ and $w:=a\cdot p+c$ and sum over $w',w$ instead of $c',c$. Rewriting the sum this way may seem problematic since $\delta$ and $\gamma$ were explicitly defined using $c$ and $c'$, however using the relations $c'=w'-a\cdot s'$ and $c=w-a\cdot p$, the definition of $\gamma$ becomes
    
    \begin{align*}
        \gamma_i=c_{3(i-1)+\Pi(e)_i}=w_{3(i-1)+\Pi(e)_i}-a\cdot p_{3(i-1)+\Pi(e)_i}=w_{3(i-1)+\Pi(e)_i}-a.
    \end{align*}
    since the honest prover gives $\Pi(e)_i$ such that $p_{3(i-1)+\Pi(e)_i}=1$. The definition of $\delta$ becomes 
    \begin{align*}
        \delta_i =  
        \begin{cases} 
            w_i+w'_j-a & \text{if variable $x_j$ is negated at position $i$ of $\Pi(\phi)$.}\\
            w_i-w'_j & \text{if variable $x_j$ is not negated at position $i$ of $\Pi(\phi)$.}
        \end{cases}
    \end{align*}
    where $x_j$ is the variable at position $i$ of $\Pi(\phi)$. Now, we can replace the summation over $c',c$ with a summation over $w',w$ in $\sigma_4$ to obtain:
        
    \begin{align*}
        \sigma_4:=\frac{1}{Q^{n+3m}}\frac{1}{3^m} &\sum_{w'\in \mathbb{F}_Q^n,w\in \mathbb{F}_Q^{3m} }\sum_{\Pi\in CP}\sum_{a\in \mathbb{F}_Q} D(w',w)_{R_1} \otimes D(a)_{Q_1}\otimes\\
        &\Big(p_{a,0}D(0)_{Q_2} \otimes D(\Pi,\delta)_{R_2} \otimes \rho(a,0,w',w)_V\\
        &+p_{a,1}D(1)_{Q_2} \otimes D(\Pi(e),\gamma)_{R_2} \otimes \rho(a,1,w',w)_V\Big).
    \end{align*}
    
    The next step is to move the sum over $\Pi$ past the terms on which it does not act. We obtain
    \begin{align*}
        \sigma_4:=\frac{1}{Q^{n+3m}} &\sum_{w'\in \mathbb{F}_Q^n,w\in \mathbb{F}_Q^{3m} }\sum_{a\in \mathbb{F}_Q} D(w',w)_{R_1} \otimes D(a)_{Q_1}\otimes\\
        &\Big(p_{a,0}D(0)_{Q_2} \otimes \frac{1}{3^m}\sum_{\Pi\in CP} D(\Pi,\delta)_{R_2} \otimes \rho(a,0,w',w)_V\\
        &+p_{a,1}D(1)_{Q_2} \otimes  \frac{1}{3^m}\sum_{\Pi\in CP} D(\Pi(e),\gamma)_{R_2} \otimes \rho(a,1,w',w)_V\Big).
    \end{align*}
    At this point, we can rename $\Pi(e)$ to $f$ and sum over all $f\in\{1,2,3\}^m$ instead. We obtain 
    \begin{align*}
        \sigma_4:=\frac{1}{Q^{n+3m}} &\sum_{w'\in \mathbb{F}_Q^n,w\in \mathbb{F}_Q^{3m} }\sum_{a\in \mathbb{F}_Q} D(w',w)_{R_1} \otimes D(a)_{Q_1}\otimes\\
        &\Big(p_{a,0}D(0)_{Q_2} \otimes \frac{1}{3^m}\sum_{\Pi\in CP} D(\Pi,\delta)_{R_2} \otimes \rho(a,0,w',w)_V\\
        &+p_{a,1}D(1)_{Q_2} \otimes  \frac{1}{3^m}\sum_{f\in\{1,2,3\}^m} D(f,\gamma)_{R_2} \otimes \rho(a,1,w',w)_V\Big).
    \end{align*}
    
    \textbf{Case 1 with Simulator}
    
    Now, we describe how to simulate the views of the verifier without the help of any provers. We will denote the $i$th simulated view as $\tilde{\sigma}_i$. Since the simulator has access to $V_1^*$ and $\rho_V$, then $\sigma_0$ and $\sigma_1$ are straightforward to simulate. Intuitively, no effort is required at this stage because the prover has not acted yet. For $\sigma_2$, the response from P1 is two uniformly random vectors $w\in\mathbb{F}_Q^{3m},w'\in\mathbb{F}_Q^{n}$ since $c$ and $c'$ act as one-time pads. Then
    \begin{align*}
        \tilde{\sigma_2}=\frac{1}{Q^{n+3m}}\sum_{w\in\mathbb{F}_Q^{3m},w'\in\mathbb{F}_Q^{n}}\sum_{a\in \mathbb{F}_Q} p_{a}  D(w',w)_{R_1}\otimes      D(a)_{Q_1}\otimes \rho(a)_V.
    \end{align*}
    
    This can be created from $\tilde{\sigma_1}$ by tensoring with the maximally mixed state in register $R_1$. Next, the simulator applies $V_2^*$ to $\tilde{\sigma_2}$ to get $\tilde{\sigma_3}$:
    \begin{align*}
        \tilde{\sigma_3}=\frac{1}{Q^{n+3m}} &\sum_{w\in\mathbb{F}_Q^{3m},w'\in\mathbb{F}_Q^{n}}\sum_{a\in \mathbb{F}_Q} \sum_{chall\in\{0,1\}} p_{a,chall}  D(w',w)_{R_1}\\
        &\otimes D(chall)_{Q_2} \otimes D(a)_{Q_1}\otimes \rho(a,chall,w',w)_V.
    \end{align*}
    
    The final step is to simulate $\sigma_4$. Again, this depends on the challenge bit $chall$. 
    \begin{itemize}
        \item For $chall=0$ in register $Q_2$, the simulator chooses a random CNF permutation $\Pi$ and then puts $D(\Pi,\tilde{\delta})$ in register $R_2$ where $\tilde{\delta}\in\mathbb{F}_Q^{3m}$ is defined below
        \begin{align*}
            \tilde{\delta}_i =  \begin{cases} w_i+w'_j-a & \text{if variable $x_j$ is negated at position $i$ of $\Pi(\phi)$}\\
            w_i-w'_j & \text{if variable $x_j$ is not negated at position $i$ of $\Pi(\phi)$}
            \end{cases}
        \end{align*}
        where $x_j$ is the variable at position $i$ of $\Pi(\phi)$.
        \item For $chall=1$ in register $Q_2$, the simulator chooses a random $\tilde{f}\in \{1,2,3\}^m$ and then places $D(\tilde{f}, \tilde{\gamma})$ in register $R_2$, where $\tilde{\gamma}\in \mathbb{F}_Q^m$ is defined below:
        \begin{align*}
            \tilde{\gamma}_i= w_{3(i-1)+\tilde{f_i}}-a.
        \end{align*}
    \end{itemize}
    This gives a final view of 
    \begin{align*}
        \tilde{\sigma_4}=\frac{1}{Q^{n+3m}} &\sum_{w\in\mathbb{F}_Q^{3m},w'\in\mathbb{F}_Q^{n}}\sum_{a\in \mathbb{F}_Q}   D(w',w)_{R_1}\otimes D(a)_{Q_1}\otimes\\
        &\Big(p_{a,0}D(0)_{Q_2}\otimes \frac{1}{3^m}\sum_{\Pi\in PERM}D(\Pi,\tilde{\delta})_{R_2}\otimes\rho(a,0,w',w)_V\\
        &+p_{a,1}D(1)_{Q_2}\otimes \frac{1}{3^m}\sum_{\tilde{f}\in\{1,2,3\}^m}D(\tilde{f},\tilde{\gamma})_{R_2}\otimes\rho(a,1,w',w)_V\Big).
    \end{align*}
    Now, it is clear that the two views $\sigma_4$ and $\tilde{\sigma_4}$ are equal since $\tilde{\delta},\tilde{\gamma}$ are defined identically to $\delta,\gamma$ in the honest case, hence our simulator succeeded.
    
    The proof of case 2 follows the same logic.
    \qed
\end{proof}

\section{Conclusion and Future Work}

The zero-knowledge protocols presented in this paper highlight the usefulness of homomorphic bit commitment. Using the relativistic commitment scheme described in Section \ref{sec:REL-CS}, contents of commitments can be easily combined to create new, efficient protocols. In particular, the Subset Sum ZKP is efficient when compared to the existing quantum secure ZKPs in the literature. In addition, the sub-protocol that allows provers to prove that two commitments are equal is extremely simple with this approach by unveiling the difference of the contents. The proof technique used for soundness in \cite{Chailloux_2017} is extremely powerful and can be applied to a wide range of protocols. 

Note that in both protocols from Sections \ref{sec:SSP} and \ref{sec:3SAT}, the commitments that were combined were commitments made only by the provers. It would be interesting to devise a protocol that verifies whether two strings are equal when one string is in the possession of one party and the other string is kept by the other party. Indeed, this could allow for novel zero-trust identification protocols that have no computational assumptions, only relativistic ones. One difficulty of directly applying the existing approach to this problem is that at the end of the protocol, the parties should not necessarily know the difference of their two strings. They should only learn one bit of information from the protocol: whether the two strings are equal or not.

Finally, in \cite{https://doi.org/10.48550/arxiv.2112.01386}, the authors managed to prove a similar bound as in \cite{Chailloux_2017}, except that it can be applied to protocols with three challenges. If these bounds could be extended to prove soundness for protocols with a polynomial number of challenges, then perhaps a similar proof technique could be applied to prove soundness of the ZKP for QMA presented in \cite{Broadbent_2022} when using the relativistic commitment scheme seen in Section \ref{sec:REL-CS}. This would allow for a simple two-prover ZKP for QMA using only relativistic assumptions.

%
% ---- Bibliography ----
%
% BibTeX users should specify bibliography style 'splncs04'.
% References will then be sorted and formatted in the correct style.
%
\bibliographystyle{alpha}
\bibliography{mybibliography}

\newcommand{\etalchar}[1]{$^{#1}$}
\begin{thebibliography}{ABC{\etalchar{+}}21}

\bibitem[ABC{\etalchar{+}}21]{Alikhani_2021}
Pouriya Alikhani, Nicolas Brunner, Claude Cr{\'{e}}peau, S{\'{e}}bastien
  Designolle, Raphaël Houlmann, Weixu Shi, Nan Yang, and Hugo Zbinden.
\newblock Experimental relativistic zero-knowledge proofs.
\newblock {\em Nature}, 599(7883):47--50, nov 2021.

\bibitem[AHJ{\etalchar{+}}22]{https://doi.org/10.4230/lipics.esa.2022.6}
Jonathan Allcock, Yassine Hamoudi, Antoine Joux, Felix Klingelhöfer, and
  Miklos Santha.
\newblock Classical and quantum algorithms for variants of subset-sum via
  dynamic programming.
\newblock Schloss Dagstuhl - Leibniz-Zentrum für Informatik, 2022.

\bibitem[AHU74]{AhoAlfredV1974Tdaa}
Alfred Aho, John Hopcroft, and Jeffrey Ullman.
\newblock {\em The design and analysis of computer algorithms}.
\newblock Addison-Wesley series in computer science and information processing.
  Addison-Wesley Pub. Co., 1974.

\bibitem[BC86]{4568210}
Gilles Brassard and Claude Crepeau.
\newblock Non-transitive transfer of confidence: A perfect zero-knowledge
  interactive protocol for sat and beyond.
\newblock In {\em 27th Annual Symposium on Foundations of Computer Science
  (sfcs 1986)}, pages 188--195, 1986.

\bibitem[BC87]{10.1007/3-540-47721-7_16}
Gilles Brassard and Claude Crepeau.
\newblock Zero-knowledge simulation of boolean circuits.
\newblock In Andrew~M. Odlyzko, editor, {\em Advances in Cryptology --- CRYPTO'
  86}, pages 223--233, Berlin, Heidelberg, 1987. Springer Berlin Heidelberg.

\bibitem[BCMS98]{https://doi.org/10.48550/arxiv.quant-ph/9806031}
Gilles Brassard, Claude Crépeau, Dominic Mayers, and Louis Salvail.
\newblock Defeating classical bit commitments with a quantum computer, 1998.

\bibitem[BG22]{Broadbent_2022}
Anne Broadbent and Alex~Bredariol Grilo.
\newblock {QMA}-hardness of consistency of local density matrices with
  applications to quantum zero-knowledge.
\newblock {\em {SIAM} Journal on Computing}, 51(4):1400--1450, aug 2022.

\bibitem[Blu87]{Blum87howto}
Manuel Blum.
\newblock How to prove a theorem so no one else can claim it.
\newblock In {\em In: Proceedings of the International Congress of
  Mathematicians}, pages 1444--1451, 1987.

\bibitem[CB21]{https://doi.org/10.48550/arxiv.2112.01386}
André Chailloux and Yann Barsamian.
\newblock Relativistic zero-knowledge protocol for np over the internet
  unconditionally secure against quantum adversaries, 2021.

\bibitem[CL17]{Chailloux_2017}
Andr{\'{e} } Chailloux and Anthony Leverrier.
\newblock Relativistic (or 2-prover 1-round) zero-knowledge protocol for np
  secure against quantum adversaries.
\newblock In {\em Lecture Notes in Computer Science}, pages 369--396. Springer
  International Publishing, 2017.

\bibitem[CSST11]{10.1007/978-3-642-25385-0_22}
Claude Cr{\'e}peau, Louis Salvail, Jean-Raymond Simard, and Alain Tapp.
\newblock Two provers in isolation.
\newblock In Dong~Hoon Lee and Xiaoyun Wang, editors, {\em Advances in
  Cryptology -- ASIACRYPT 2011}, pages 407--430, Berlin, Heidelberg, 2011.
  Springer Berlin Heidelberg.

\bibitem[GMR85]{10.1145/22145.22178}
Shafi Goldwasser, Silvio Micali, and Charles Rackoff.
\newblock The knowledge complexity of interactive proof-systems.
\newblock In {\em Proceedings of the Seventeenth Annual ACM Symposium on Theory
  of Computing}, STOC '85, page 291–304, New York, NY, USA, 1985. Association
  for Computing Machinery.

\bibitem[GMW91]{10.1145/116825.116852}
Oded Goldreich, Silvio Micali, and Avi Wigderson.
\newblock Proofs that yield nothing but their validity or all languages in np
  have zero-knowledge proof systems.
\newblock {\em J. ACM}, 38(3):690–728, jul 1991.

\bibitem[Ken05]{jofc-2005-14207}
Adrian Kent.
\newblock Secure classical bit commitment using fixed capacity communication
  channels.
\newblock {\em J. Cryptology}, 18:313--335, 2005.

\bibitem[KT05]{10.5555/1051910}
Jon Kleinberg and Eva Tardos.
\newblock {\em Algorithm Design}.
\newblock Addison-Wesley Longman Publishing Co., Inc., USA, 2005.

\bibitem[LKB{\etalchar{+}}15]{Lunghi_2015}
Tommaso Lunghi, Jedrzej Kaniewski, Felix Bussières, Raphael Houlmann, Marco
  Tomamichel, Stephanie Wehner, and Hugo Zbinden.
\newblock Practical relativistic bit commitment.
\newblock {\em Physical Review Letters}, 115(3), jul 2015.

\end{thebibliography}

\end{document}